\documentclass{vldb}

\usepackage{amsthm}
\usepackage{latexsym}

\usepackage{auto-pst-pdf}
\usepackage{footnote}
\usepackage{listings}
\usepackage{enumerate}

\usepackage{xcolor}
\usepackage{multirow}
\usepackage{graphicx}
\usepackage{url}

\usepackage[breaklinks]{hyperref}
\usepackage{breakurl}
\usepackage{xfrac}
\usepackage{balance}
\usepackage{srcltx}
\usepackage{xspace}
\usepackage{booktabs}
\usepackage{enumitem}
\usepackage{subcaption}
\usepackage{caption}

\usepackage{soul} 

\usepackage[ruled,vlined]{algorithm2e}
\usepackage{algpseudocode}
\usepackage{float}

\makeatletter
\def\@copyrightspace{\relax}
\makeatother

\def\endabstract{\if@twocolumn\else\endquotation\fi}

\usepackage{balance}

%

\graphicspath{{figs/}}



\newcommand{\newtext}[1]{#1}

\newcommand\Mark[1]{\textsuperscript#1}


\newtheorem{theorem}{Theorem}[section]
\newtheorem{definition}[theorem]{Definition}
\newtheorem{conjecture}[theorem]{Conjecture}
\newtheorem{corollary}[theorem]{Corollary}

\vldbTitle{Chasing Similarity: Distribution-aware Aggregation Scheduling}
\vldbAuthors{Feilong Liu, Ario Salmasi, Spyros Blanas, Anastasios Sidiropoulos}
\vldbDOI{https://doi.org/TBD}
\vldbVolume{12}
\vldbNumber{xxx}
\vldbYear{2019}

\begin{document}

\title{Chasing Similarity: Distribution-aware \\Aggregation Scheduling
(Extended Version) \thanks{This is an extended version of the paper:
Feilong Liu, Ario Salmasi, Spyros Blanas, Anastasios Sidiropoulos.
``Chasing Similarity: Distribution-aware Aggregation Scheduling''.
\textit{PVLDB}, 12(3): 292-306, 2018 [25].}}

\numberofauthors{1}
\author{
	\alignauthor Feilong Liu\Mark{1}, Ario Salmasi\Mark{1}, Spyros Blanas\Mark{1}, Anastasios Sidiropoulos\Mark{2} \\
	\affaddr{\Mark{1}The Ohio State University, \Mark{2}University of Illinois at Chicago} \\
	\email{{\large \{}liu.3222,salmasi.1,blanas.2{\large \}}@osu.edu, sidiropo@gmail.com} \\
}

\maketitle

\nocite{grasp}


\begin{abstract}

Parallel aggregation is a ubiquitous operation in data analytics that is
expressed as \texttt{GROUP\;BY} in SQL, \texttt{reduce} in Hadoop,
or \texttt{segment} in TensorFlow.
Parallel aggregation starts with an optional local pre-aggregation
step and then repartitions the intermediate result across the network.
While local pre-aggregation works well for low-cardinality
aggregations, the network communication cost remains significant
for high-cardinality aggregations even after local pre-aggregation.
The problem is that the repartition-based algorithm for
high-cardinality aggregation does not fully utilize the network.

In this work, we first formulate a mathematical model that captures the
performance of parallel aggregation.
We prove that finding optimal aggregation plans from a known data
distribution is NP-hard, assuming the Small Set Expansion conjecture.
We propose GRASP, a GReedy Aggregation Scheduling Protocol
that decomposes parallel aggregation into phases.
GRASP is distribution-aware as it aggregates the most similar partitions
in each phase to reduce the transmitted data size in subsequent phases.
In addition, GRASP takes the available network bandwidth into account
when scheduling aggregations in each phase to maximize network utilization.
The experimental evaluation on real data shows that GRASP outperforms
repartition-based aggregation by $3.5\times$ and LOOM by $2.0\times$.

\end{abstract}


\vspace{-0.5em}
\section{Introduction}
\label{sec:intro}

\noindent
Aggregation is widely used in data analytics.
Parallel aggregation is executed in two
steps.
The first step is an optional local aggregation where data is
aggregated locally, followed by a second step where data is repartitioned and
transferred to the final destination node for 
aggregation~\cite{Shatdal:parallelagg95sigmod,graefe:93}.
The local aggregation can reduce the amount of data transferred
in the second step for algebraic aggregations, as tuples with the same
GROUP BY key are aggregated to a single tuple during local aggregation~\cite{Cieslewicz:2007vldb,ye:agg11damon,gagan:simd17,Polychroniou:2015simd,liwang:numaagg15}.
Local aggregation works effectively for low-cardinality domains, such as
\texttt{age}, \texttt{sex} or \texttt{country}, where data can be reduced
substantially and make the cost of the repartition step negligible.
However, high-cardinality aggregations see little or no benefit from local
aggregation.
Optimizing the repartitioning step for high-cardinality aggregations has
received less research attention.

High-cardinality aggregations are surprisingly common in practice.
One example is sessionization, where events in a timestamp-ordered log
need to be grouped into user sessions for analysis. 
An exemplar is the publicly-available Yelp dataset where
5.2M reviews are aggregated into 1.3M user sessions~\cite{yelpdataset}.
Even when there are no high-cardinality attributes, aggregation on
composite keys of multiple attributes can lead to
high-cardinality aggregations, which is common in data cube
calculations~\cite{datacube96icde}.

This paper focuses on reducing the communication cost for
high-cardinality aggregations.
We classify aggregations into two types: all-to-one aggregation and
all-to-all aggregation.
In all-to-one aggregation, one coordinator collects and aggregates data
from all compute nodes. 
All-to-one aggregation frequently happens at the last stage of a query.
In all-to-all aggregation, data is repartitioned on the GROUP BY
attributes and every node aggregates a portion of the data.
All-to-all aggregation is common in the intermediate stages of a query plan.

Directly transmitting the data to the destination node during an
aggregation underutilizes the network.
In all-to-one aggregation, 
the receiving link of the destination is the bottleneck while every
other receiving link in the network is idle.
In all-to-all aggregation, workload imbalance due to skew or non-uniform
networks~\cite{datacenter,luo18socc} means that some network links will be
underutilized when waiting for the slower or overburdened links to
complete the repartitioning.

\begin{figure*}[t]
\begin{minipage}{0.24\linewidth}
\centering
\includegraphics[]{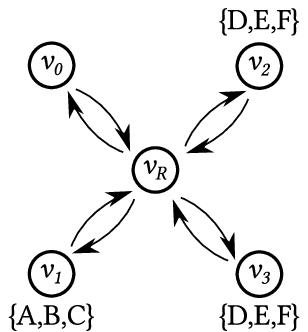}
\vspace{0.4em}
\caption{\small Graph representation of a cluster with four nodes. The
aggregation destination is $v_0$ and the router is $v_R$.}
\vspace{1.2em}
\label{fig:graphexample}
\end{minipage}
\hfill
\begin{minipage}{0.22\linewidth}
\vspace{-1.1em}
\centering
\includegraphics[]{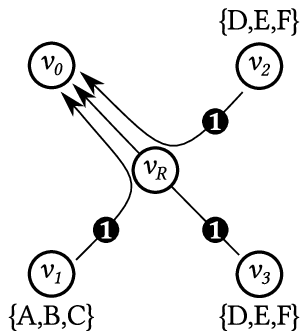}
\vspace{0.4em}
\caption{\small Aggregation based on \textbf{repartitioning} completes in 9 time
units. The bottleneck is the $v_R\!\to\!v_0$ link.}
\label{fig:graphexample-repart-phase1}
\end{minipage}
\hfill
\begin{minipage}{0.25\linewidth}
\vspace{-1.6em}
\centering
\includegraphics[scale=1.00]{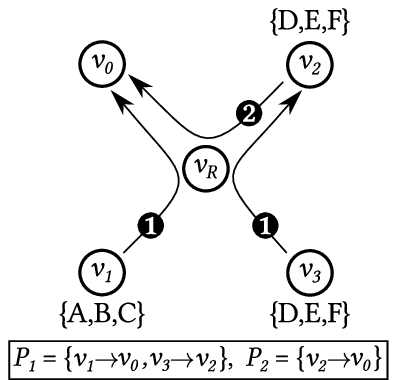}
\vspace{0.4em}
\caption{\small The \textbf{similarity-a\-ware} plan completes in 6 time units.}
\label{fig:graphexample-phase}
\end{minipage}
\hfill
\begin{minipage}{0.25\linewidth}
\vspace{-1.6em}
\centering
\includegraphics[scale=1.00]{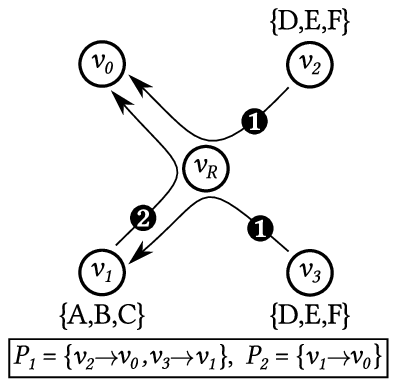}
\vspace{0.4em}
\caption{\small The \textbf{similarity-ob\-liv\-i\-ous} plan finishes in 9~time units.}
\label{fig:graphexample-phase-v2}
\end{minipage}
\vspace{-2em}
\end{figure*}


Systems such as Dremel~\cite{dremel10vldb},
Camdoop~\cite{camdoop12nsdi}, NetAgg~\cite{Mai14Netagg} and
SDIMS~\cite{sdims04sigcomm} reduce the communication cost and increase
network utilization by using \emph{aggregation trees} for all-to-one
aggregations.
The most relevant prior work is
LOOM~\cite{Culhane2014hotcloud,Culhane2015infocomm}, which 
builds aggregation trees \newtext{in a network-aware manner. 
LOOM assumes that every node stores $|R_{leaf}|$ distinct
keys and that the cardinality of the final aggregation result is
$|R_{root}|$.
Given these parameters as input, 
LOOM produces an aggregation tree with a fan-in that is a function of
the reduction rate $\sfrac{|R_{root}|}{|R_{leaf}|}$.  Applying LOOM
during query execution is not trivial, however, as the cardinality of
the input and the final result is not known in advance. 
(Even estimations of the cardinality can be
inaccurate \cite{leis18vldb}.)
Furthermore, the aggregation plan that LOOM produces fails to consider
how the \emph{similarity between partitions} impacts the
reduction rate at intermediate steps of the aggregation.
}

The importance of considering partition similarity 
during aggregation can be shown with an example.
Figure~\ref{fig:graphexample} shows an all-to-one aggregation in a
4-node cluster, where $v_R$ is the
switch, node $v_0$ is the destination node, node
$v_1$ stores three tuples with keys A, B and C, and nodes $v_2$ and $v_3$
store three tuples each with keys D, E and F. (For simplicity, the
figures only show the GROUP BY keys.)
\begin{itemize}[leftmargin=1em, topsep=0em, noitemsep]
\item
The \textbf{repartitioning} strategy in
Figure~\ref{fig:graphexample-repart-phase1} finishes the aggregation in
9 time units, 
where one time unit is the time $v_0$ needs to receive and process a single tuple.
\item
The \textbf{similarity-aware} aggregation plan
in Figure~\ref{fig:graphexample-phase} proceeds in two phases. 
In the first phase, $v_1$ transmits keys \{A,B,C\} to $v_0$ and $v_3$
transmits keys \{D,E,F\} to $v_2$. 
In the second phase, $v_2$ computes the partial aggregation and
transmits keys \{D,E,F\}.
The entire aggregation completes in 6 time units --- $1.5\times$
faster than repartitioning.
%
\item
The
\textbf{similarity-oblivious} aggregation plan shown in
Figure~\ref{fig:graphexample-phase-v2} transmits keys \{D,E,F\} from $v_3$
to $v_1$ in the first phase and then needs 6 time units in the second
phase to transmit keys \{A,B,C,D,E,F\} to $v_0$. The entire
aggregation completes in 9 time units, as fast as repartitioning.
\end{itemize}

\vspace{0.25em}
\noindent
This paper introduces GRASP, an algorithm that carefully constructs
aggregation plans to accelerate high-cardinality aggregation.
Unlike prior solutions~\cite{dremel10vldb, camdoop12nsdi, Mai14Netagg,
sdims04sigcomm} that do not consider if 
data can be combined during an aggregation,
GRASP 
\textit{aggregates fragments with similar keys first to improve performance.}
GRASP has the following attributes:
(1) it is distribution-aware as it 
selects aggregation pairs that will produce smaller partial aggregates,
(2) it is topology-aware
as it schedules larger data transfers on faster network links,
(3) it achieves high network utilization as it uses as many
network links as possible.

The paper is structured as follows.
%
Section~\ref{sec:problemdef} develops a theoretical model for the network
cost of parallel data aggregation.
Section~\ref{sec:graspalg} introduces GRASP, a topology-aware and data
distribution-aware algorithm, that accelerates
aggregations by leveraging partition similarity. 
A natural question to ask is if GRASP produces aggregation plans that 
approximate the optimal plan by some constant factor.
Section~\ref{sec:hardness} proves that the aggregation
scheduling problem 
cannot be approximated
within a constant factor by any polynomial algorithm (including GRASP),
assuming the SSE conjecture.
Section~\ref{sec:exp} contains the experimental evaluation 
which shows that GRASP 
can be up to $3.5\times$ faster than repartitioning and up to
$2.0\times$ faster than LOOM on real datasets.

\vspace{-0.5em}
\section{Problem definition}
\label{sec:problemdef}

\begin{table}[b]
\caption{Symbol definitions.}
\label{tab:problemdef}
	\small
\centering
\begin{tabular}{cl}
\toprule
\textbf{Symbol}                    &\multicolumn{1}{c}{\textbf{Description}}                                                                                                  \\ 
\midrule
$s \rightarrow t $    & Data transfer from node $s$ to node $t$                   \\ 
$P_i$                       & Phase $i$, $P_i=\{s_1\!\rightarrow\!t_1,\;s_2\!\rightarrow\!t_2,\;\ldots\}$ \\ 
$\mathbb{P}$              & Aggregation plan, $\mathbb{P} = \{P_1, P_2, \ldots\}$\\
$X_i^l\left(v\right )$      & Data of partition $l$ in node $v$ after $P_i$ completes \\ 
$X_i\left(v\right )$      & Data in $v$ after $P_i$ finishes, $X_i\left(v \right) = \bigcup_{l} X_i^l(v)$\\ 
$X_0\left(v\right )$      & Data in $v$ before the aggregation starts \\ 
$Y_i\left(s \rightarrow t\right )$      & Data sent from $s$ to $t$ in phase $P_i$ \\ 
$w$      & Size of one tuple \\ 
$B\left(s \rightarrow t \right )$                  & Available bandwidth for  the $s \rightarrow t$ data transfer \\ 
$\texttt{COST}\left(s\!\rightarrow\!t \right)$ & Network cost for the $s \rightarrow t$ data transfer \\ 
\bottomrule
\end{tabular}
\end{table}

\begin{figure*}[t]
\centering
\includegraphics[width=\textwidth]{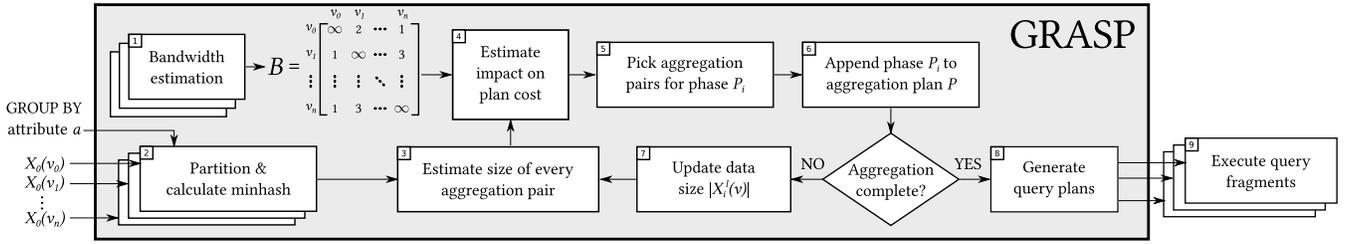}
\vspace{0.4em}
\caption{The GRASP framework.}
\vspace{-1.0em}
\label{fig:systeoverview}
\end{figure*}

\noindent
We use a connected, directed, weighted graph $G =(V(G),$
$E(G))$ to represent the network topology of the cluster.
Each edge $\langle v_i, v_j \rangle \in E\left(G\right )$ represents one
network link, with the edge direction to be the direction of data flow.

The fat-tree topology is widely used
in data centers~\cite{fattree}.
We represent all routers in the network as a single node $v_R \in
V\left(G\right )$ and model the fat-tree topology as a star network.
The set $V_C = V\left(G\right ) - \left \{v_R \right \}$ represents
the compute nodes of the cluster.
Compute nodes have bidirectional network links, therefore
$E\left(G\right ) = \left \{\langle s, v_R \rangle | s \in V_C\right \}$ 
$\bigcup_{\;}^{\;}\left \{\langle v_R, t \rangle | t \in V_C\right \}$,
where edge $\langle s, v_R \rangle$ represents the uplink and
edge $\langle v_R, t \rangle$ represents the downlink.

\subsection{Modeling all-to-one aggregations}
\label{sec:modelallto1}
\noindent
\textbf{Aggregation Model.}
We first consider an aggregation where data is aggregated to
one single node $v^* \in V_C$.
The aggregation consists of multiple phases which execute in serial order.
We use $\mathbb{P}$ to denote an aggregation execution plan with $n$ phases, 
$\mathbb{P} = \left \{P_1, P_2, ..., P_n\right \}$,
where $P_i$ represents one phase of the aggregation.
In a phase $P_i$, there are $k$ concurrent data transfers,
$P_i=\left \{s_1\to t_1, ..., s_k\to t_k\right \}$, where
$s_j \to t_j$ denotes the data transfer in which node $s_j$
sends all its data to node $t_j$.
Figure~\ref{fig:graphexample-phase} shows an aggregation execution plan
$\mathbb{P}$ with two phases $P_1$ and $P_2$.
Phase $P_1$ performs two data transfers
$v_1 \to v_0, v_3 \to v_2$, and phase $P_2$
performs one data transfer $v_2 \to v_0$.

We impose one constraint in the selection of $s \to t$ pairs:
node $s$ will never send its data to a node $t$ that has no data, unless
$t$ is the final destination node $v^*$,
as no data will be aggregated in this case.
(In fact, we could not find any instance where transferring to an empty
node $t$ would be beneficial over transmitting data directly to the
destination $v^*$ in a single-path star topology.)
Hence, a node can be a receiver multiple times across multiple phases, 
but once it transmits its data in some phase $P_i$
it becomes inactive and
it will not participate in the aggregation in phases $P_{i+1}, ..., P_{n}$. 
A corollary is that a node cannot be both sending and receiving data in
the same phase.

Let 
$X_0\left(v\right )$ 
be the data in node $v$ in the beginning of the aggregation execution
and
$X_i\left(v\right )$
be the data in node $v$ after phase $P_i$ completes. 
Let $Y_i(s\to t)$ be the data sent from $s$ to $t$ in phase $P_i$.
A node will send all its local data within one phase, hence  
$Y_i(s\to t) = X_{i-1}(s)$.
After phase $P_i$ completes, 
for every transfer $s \to t \in P_i$,
$X_i\left(s\right ) = \varnothing$ and 
\begin{equation}
\small
X_i\left(t\right ) = 
X_{i-1}\left(t\right ) \bigcup_{\;}^{\;}
\left(\bigcup_{s \to t  \in P_i }^{\;}X_{i-1} \left(s\right ) \right )
\label{eq:updatedata:all-to-one}
\end{equation}

The aggregation has finished in phase $n$ when all nodes except $v^*$
have sent their data out for aggregation:
\begin{equation}
\forall v \in \big \{ V_C - \left \{ v^* \right \} \big \} : X_n\left(v\right ) = \varnothing
\label{eq:stopcondition}
\end{equation}

\vspace{0.5em}
\noindent
\textbf{Aggregation cost.}
The aggregation execution plan $\mathbb{P} = \left \{P_1, ..., P_n\right \}$ consists of phases in serial order. Hence the network cost of $\mathbb{P}$ is:
\begin{equation}
\texttt{COST}\left(\mathbb{P}\right ) = \sum \texttt{COST}\left(P_i\right )
\label{eq:totalcost}
\end{equation}
The network cost 
for phase $P_i = \left \{s_1\to t_1, ..., s_k\to t_k\right \}$ 
is the cost of the network transfer which completes last:
\begin{equation}
\texttt{COST}\left(P_i\right ) = \underset{ s_j \to t_j \in P_i }{max} \texttt{COST}\left(s_j \to t_j\right ) 
\label{eq:phasecost}
\end{equation}

\noindent
The cost of the data transfer $s_j\to t_j$ is the time it takes to
transfer $|Y_{i}(s_j\!\to\!t_j)|$ tuples of size $w$ each over
the available link bandwidth $B\left(s_j\!\to\!t_j\right )$:
\begin{equation}
\texttt{COST}\left(s_j\to t_j\right ) = \frac{|Y_{i}(s_j \to t_j)| \cdot w}{B\left(s_j \to t_j\right )}
\label{eq:datatransfercost}
\end{equation}
Section~\ref{sec:alg:estbw} shows how GRASP estimates
$B\left(s_j \to t_j\right )$ without network topology information.
Section~\ref{subsec:bandwidth} shows one way to calculate
$B\left(s_j \to t_j\right )$ if all network activity is known.

\vspace{0.5em}
\noindent
\textbf{Problem definition.}
Given a connected, directed, weighted graph $G$, 
the data $X_0\left(v\right )$ in every node $v \in V_C$, 
the final destination node $v^* \in V_C$,
obtain an aggregation execution plan containing one or more phases $\mathbb{P} = \left \{P_1, P_2, ..., P_n\right \}$ such that
$\texttt{COST}\left(\mathbb{P}\right )$ is minimized.

\subsection{Modeling all-to-all aggregations}
\noindent
The all-to-all aggregation model executes multiple all-to-one
aggregations over different partitions in a single plan. 

\vspace{0.5em}
\noindent
\textbf{Aggregation Model.}
In all-to-all aggregation 
data is divided into $m$ partitions, $L = \left \{l_1, l_2, ..., l_m\right \}$.
Every compute node in $V_C$ is the aggregation destination for one or
more partitions.
This is specified by a mapping $M: L \to V_C$ that maps a
partition $l \in L$ to a specific destination $v \in V_C$.
Let $X_0^l\left(v\right )$ be the
data of partition $l$ in node $v$ in the beginning of the aggregation
execution and 
$X_i^l\left(v\right )$ be the data of partition
$l$ in node $v$ after phase $P_i$ completes. 

Within one aggregation phase, 
a node $s$ will send an entire partition $l$ of local data to $t$, 
hence $Y_i(s\to t) = X_{i-1}^{l}(s) \subseteq X_{i-1}(s)$.
Once a node transmits all its data for partition $l$ it becomes
inactive in subsequent phases for this partition,
but it will participate in aggregations for other active partitions.
Hence, in all-to-all aggregation a node can be both sending and
receiving data in the same phase, as long as it does not send and
receive data belonging to the same partition.
$X_i\left(v\right )$ 
is the data in node $v$ after phase $P_i$ completes:
\begin{equation}
\small
X_i\left(v\right ) = 
X_{i-1}\left(v\right ) \bigcup_{\;}^{\;}
\left(\bigcup_{s \to v  \in P_i }^{\;}Y_{i} \left(s \to v\right ) \right ) - 
\bigcup_{v\to t \in P_i}^{\;}Y_{i}(v\to t)
\label{eq:updatedata:all-to-all}
\end{equation}
All-to-all aggregation completes when data in all partitions are aggregated to
their corresponding destination:
\begin{equation}
\forall \; l\to v^* \in M : 
\forall v \in \big \{ V_C - \left \{ v^* \right \} \big \} : 
X_n^l\left(v\right ) = \varnothing
\label{eq:stopconditionalltoall}
\end{equation}

\vspace{0.5em}
\noindent
\textbf{Problem definition.}
Given a connected, directed, weighted graph $G$,
the data $X_0^l\left(v\right )$ for each partition $l \in L$ in every
node $v \in V_C$,
and a mapping $M: L \to V_C$ denoting the destination of each partition,
obtain an aggregation execution plan containing one or more phases $\mathbb{P} = \left \{P_1, P_2, ..., P_n\right \}$ such that
$\texttt{COST}\left(\mathbb{P}\right )$ is minimized.


\section{The GRASP framework}
\label{sec:graspalg}

\noindent
This section introduces GRASP, a \underline{gr}eedy \underline{a}ggregation 
\underline{s}ched\-ul\-ing \underline{p}rotocol,
which uses partition similarity as a heuristic to carefully schedule data 
transfers to improve performance.



\subsection{Overview}

\noindent
Figure~\ref{fig:systeoverview} shows an overview of the GRASP framework.
The inputs to the framework
are the data $X_0(v)$ in every node $v$ 
and the \textsc{Group By} attribute $a$. 
The input data may be either a table in the database or
an intermediate result produced during query processing.
Steps~\ref{step:bwest},~\ref{step:minhash}
and~\ref{step:queryexec}
are run by all compute nodes, while steps
\ref{step:intersect}--\ref{step:genfrag} are run in the coordinator.

\begin{enumerate}[label=\textbf{\arabic*}), wide=0pt, ref={\arabic*}, nosep]
\item \textbf{Bandwidth estimation.} \label{step:bwest}
Every node estimates the available 
bandwidth between itself and other nodes and stores it in matrix $B$.
Section~\ref{sec:alg:estbw} describes the process in detail.
\item \textbf{Partition, pre-aggregate and calculate minhash signatures.} \label{step:minhash}
Every node partitions and aggregates data locally.
During this operation, every node runs 
the minhash algorithm~\cite{jaccard98,
lsh98,lsh99} to produce succinct
minhash signatures.

\item \textbf{Estimate the cardinality of every possible pair.} \label{step:intersect}
The coordinator collects the minhash signatures and estimates the
cardinality of all possible aggregation pairs.
An aggregation pair is a partition $l$, a source node $s$ and a
destination node $t$.
Section~\ref{sec:algo:minhash} presents the algorithms in detail.

\item \textbf{Estimate the cost of the final plan.} \label{step:heuristic}
The coordinator uses the available bandwidth matrix $B$ as input and 
estimates the runtime cost and the future benefit of executing every possible
aggregation pair. 
Section~\ref{sec:alg:heuristic} describes the cost heuristic.

\item \textbf{Generate aggregation phase $P_i$.} \label{step:aggselect}
The coordinator
selects aggregation pairs for phase $P_i$ based on their cost. 
The detailed algorithm is described in Section~\ref{sec:alg:aggselect}.


\item \textbf{Add $P_i$ to aggregation plan $\mathbb{P}$.} \label{step:addtoplan}
If the aggregation is complete, the aggregation plan $\mathbb{P}$ is scheduled for
execution.

\item \textbf{Update data size $|X_{i}^{l}(v)|$.} \label{step:updatesize}
The coordinator updates the estimation of the size
of each partition $|X_{i}^{l}(v)|$ in every node for the next phase of
the aggregation.
GRASP does not make another pass over the data,
as the minhash signature of any intermediate result 
can be calculated from the original minhash signatures obtained in
Step~\ref{step:minhash}.

\item \textbf{Generate query plans.} \label{step:genfrag}
The aggregation planning is complete.
GRASP generates query plans for execution.

\item \textbf{Query execution.} \label{step:queryexec}
Every node in the cluster 
executes its assigned aggregations for each phase.
\end{enumerate}


\subsection{Estimating the bandwidth}
\label{sec:alg:estbw}


\noindent
This section describes how GRASP estimates the available
bandwidth for data transfers without network topology information.
GRASP schedules aggregation plans so that one node sends to and receives from at most one
node within a phase to avoid network contention.
This ensures that the outgoing link
and the incoming link of each node are used by at most one data transfer.
Similar approaches are used by R{\"{o}}diger et al.~\cite{wolf15vldb} to
minimize network contention.



GRASP measures the pair-wise bandwidth through a benchmarking
procedure that is executed on system startup.
The bandwidth $B(s \rightarrow t)$ is measured by running a benchmark on
every $s$ and $t$ pair individually, where $s$ keeps sending data to $t$. 
The average throughput is stored as the estimation of $B(s \rightarrow
t)$ in a matrix, where the row index is the sender and the column index
is the receiver.
(For example, $B(v_0 \rightarrow v_1)=2$ in Figure~\ref{fig:systeoverview}.)
The bandwidth matrix $B$ is computed once and reused for all queries that follow.
Section~\ref{sec:exp:bwest} evaluates the accuracy of the estimation and
the robustness of GRASP to estimation errors.


\subsection{Estimating the size of intermediate results}
\label{sec:algo:minhash}

\noindent
GRASP needs to estimate the cardinality of the intermediate result
between every node pair $s$ and $t$ for aggregation planning.
According to set theory, the size of the union of two sets $S$ and $T$ 
can be calculated
as $|S \cup T| = |S| + |T| - |S \cap T| = \frac{|S|+|T|}{1+J}$, 
where $J$ is the Jaccard similarity
$J = \frac{|S\cap T|}{|S \cup T|}$.
Hence one can calculate the cardinality of an aggregation from the
cardinality of the input partitions $S$, $T$ and the Jaccard similarity between them.

Accurately calculating the Jaccard similarity is as expensive as
computing the aggregation itself, as it requires collecting both
inputs to the same node.
GRASP thus estimates the Jaccard similarity using the minhash
algorithm~\cite{jaccard98, lsh98,lsh99}.
After running minhash, the inputs are represented 
by a small vector of integers called a \emph{minhash signature}.
The minhash signatures are used to estimate the Jaccard similarity between
the two sets.

The minhash algorithm generates minhash signatures by applying
a set of hash functions to the dataset.
The minhash signature value is the minimum value produced by each hash
function.
Figure~\ref{fig:minhashexample} shows an example of the minhash
signature calculation for two sets $S$ and $T$ and their minhash
signatures $sig(S)$ and $sig(T)$, respectively.
The Jaccard similarity between the two sets can be estimated from the
minhash signatures as
the fraction of the hash functions which produce
the same minhash value for both sets.
In the example shown in Figure~\ref{fig:minhashexample},
the accurate Jaccard similarity is 
$J_{acc}=\frac{|S\cap T|}{|S \cup T|}=\frac{6}{10}$. 
The estimated Jaccard similarity from the minhash signatures is
$J_{est}=\sfrac{1}{2}$, as only hash function $h_2(\cdot)$ produces
the same minhash value between the two sets.

\begin{figure}[t]
\centering
\includegraphics[width=\columnwidth]{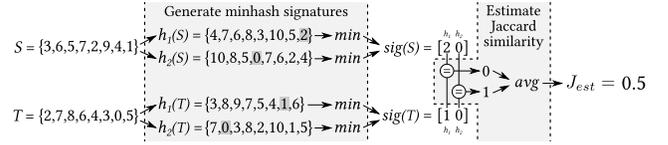}
\vspace{0.2em}
\caption{
Example of Jaccard similarity estimation with the minhash algorithm and
hash functions $h_1(x)=(x+1)\; mod\; 11$ and $h_2(x)=(3x+1)\; mod\; 11$.
}
\label{fig:minhashexample}
\vspace{-1em}
\end{figure}

Another appealing property of the minhash algorithm is that the
minhash signature $sig(S \cup T)$ can be computed from the minhash
signatures $sig(S)$ and $sig(T)$, respectively: The minhash signature of
the union is the pairwise minimum of the respective signatures, or
$sig(S \cup T)[i] = min\big(sig(S)[i],$ $sig(T)[i]\big)$.
The practical significance of this property is that GRASP needs to access
the original data only once before the aggregation starts, and then
will operate on the much smaller signatures during aggregation planning.

In GRASP, every node 
partitions the local data and 
calculates the cardinality and the minhash signatures for each partition. 
(This is step~\ref{step:minhash} in Figure~\ref{fig:systeoverview}.)
The coordinator collects the cardinality and the minhash signature for
each partition of every node in two arrays \texttt{Card} and
\texttt{MinH} of size $|V_C| \times |L|$.
The arrays are initialized to
\texttt{Card}$[v,l] \leftarrow \big|X_0^l(v)\big|$ and 
\texttt{MinH}$[v,l] \leftarrow sig\big(X_0^l(v)\big)$. 
After these arrays are populated with information from every node, 
they are only accessed by two functions during aggregation planning,
which are defined in Algorithm~\ref{alg:minhashfunc}.
The first function is \textsc{EstCard}$(s,t,l)$
which estimates the Jaccard similarity between the sets $X_i^l(s)$ and
$X_i^l(t)$ from their minhash signatures and returns an estimate of the
cardinality of their union.
The second function is \textsc{Update}$(s,t,l)$
which updates the \texttt{Card} and \texttt{MinH} arrays
after the $s \to t$ transfer of partition $l$. 

How many hash functions does minhash need?
GRASP uses only 100 hash functions so that
signatures are less than 1KB.
This choice sacrifices accuracy but keeps the computation and network cost small.
Satuluri and Partha\-sa\-rathy~\cite{Satuluri:2012} show that the
estimation is within 10\% of the accurate similarity with $95\%$ probability
when $n=100$.
Section~\ref{sec:exp:realdata} evaluates the accuracy 
of the minhash estimation.

{
\vspace{-1em}
\begin{algorithm}[h]

\caption{\textsc{EstCard}$(s,\!t,\!l)$ estimates $\left |X_i^l(s) \cup
X_i^l(t)\right |$ and \textsc{Update}$(s,\!t,\!l)$ updates the \texttt{Card}
and \texttt{MinH} arrays.}

\label{alg:minhashfunc}
\small

\SetKwProg{myfunc}{function}{}{}

\textbf{Input}
$s, t \in V_C$: computing node identifiers \\
\hspace{2.8em}
$l \in L$: data partition identifier \\

\myfunc{\textsc{EstCard}$(s,\!t,\!l)$}
{

\nl $\texttt{sigS} \leftarrow \texttt{MinH}[s,l]$;
 $\texttt{sigT} \leftarrow \texttt{MinH}[t,l]$;
 $J \leftarrow 0$ \\
\nl \For{$j \in [1,n]$}
{
\nl \If{\emph{\texttt{sigS}}$[j] =$ \emph{\texttt{sigT}}$[j]$}
	{
		\nl $J \leftarrow J + \sfrac{1}{n}$
	}
}
\nl \Return $\frac
{\texttt{Card}[s,l]\;+\;\texttt{Card}[t,l]}
{1\;+\;J}$  \\

}

\myfunc{\textsc{Update}$(s,\!t,\!l)$}
{

\nl $\texttt{Card}[t,l] \leftarrow \textsc{EstCard}(s,\!t,\!l)$ \\

\nl $\texttt{Card}[s,l] \leftarrow 0$ \\

\nl \For{$j \in [1,n]$}
{
		\nl $\texttt{MinH}[t,l][j] \leftarrow min(\texttt{MinH}[s,l][j],\texttt{MinH}[t,l][j])$ \\ 
		
}

\nl $\texttt{MinH}[s,l] \leftarrow \;\perp$ \\ 

}
\end{algorithm}
}


\subsection{Forecasting the benefit of each aggregation}
\label{sec:alg:heuristic}

\noindent
Ideally one should take the cost of all future aggregation phases
into account when picking the best plan for the current phase.
This is prohibitively expensive as there are $n^{n-2}$ possible
aggregation trees for a cluster with $n$ nodes \cite{cayley1889}.
A greedy approach that minimizes the cost of the current phase only ignores how
similarity can reduce the network cost of future data transfers.
Hence, GRASP looks one phase ahead during optimization to balance the
network transfer cost of a data transfer in the current phase with the
anticipated future savings from transmitting less data in the next phase.

The heuristic GRASP uses to pick which transfers to schedule in the
current phase is based on 
a cost function $C_i(s, t, l)$ that adds the cost of an $s \to
t$ transfer in this phase and the cost of transmitting the union of the
data in the next phase.
$C_i(s, t, l)$ is constructed based on the following intuition:

\vspace{0.5em}
\noindent \textbf{1)} 
Penalize the following transfers 
by setting $C_i = \infty$ so that they will never be picked:
(1) Node $s$ sending partitions whose destination is $s$,
to prevent circular transmissions.
(2) One node sending a partition to itself, 
as this is equivalent to a no-op.
(3) Transfers involving nodes that neither have any data
nor are they the final destination for this partition. 

\vspace{0.5em}
\noindent \textbf{2)}
When any node transmits partition $l$ to its final destination $M(l)$, only
the cost of the data transfer needs to be considered, as this partition
will not be re-transmitted again.
Hence, we set $C_i$ to $\texttt{COST}(s\rightarrow t)$ in this case, 
where $\texttt{COST}$ is defined in Eq.~\ref{eq:datatransfercost}, 
and $Y_i\left ( s \rightarrow t \right ) = X_{i-1}^{l}\left ( s \right )$.

\vspace{0.5em}
\noindent \textbf{3)} 
Otherwise, add the cost of the $s \rightarrow t$ transfer to the 
cost of transmitting the aggregation result in the next phase.
We define
$E_i(s, t, l) = \frac{\textsc{EstCard}(s,t,l) \cdot w}{B(s\rightarrow t)}$
to simplify the notation.
\vspace{0.5em}

Based on the above, we define $C_i$ for a transfer $s \to t$ of
partition $l$ between any pair of nodes $(s, t)$ in phase $P_i$ as:
\begin{equation}
C_i(s, t, l)=\left\{\begin{matrix}
\infty  & s=t \\
\infty  & s=M(l)  \\
\infty  & X_{i-1}^{l}\left ( s \right ) = \varnothing \\
\infty  & X_{i-1}^{l}\left ( t \right ) = \varnothing \\
\texttt{COST}(s\rightarrow t) & t = M(l) \\ 
\texttt{COST}(s\rightarrow t) + E_i(s, t, l) & otherwise  \vspace{0.5em}\\
\end{matrix}\right.
\label{eq:cost}
\end{equation}

Figure~\ref{fig:costexample} shows $C_1$ for the
phase $P_1$ of the aggregation shown in Figure~\ref{fig:graphexample}.
There is only one partition in this example, hence $l=0$.
The row index is the sending node and the column index is the receiving node.
Note that the matrix $C_i$ will not be symmetric, because transfers
$s \rightarrow t$ and $t \rightarrow s$ transmit different data and
use different network links.

%
%

%
%

\begin{figure}[t]
\centering
\includegraphics[]{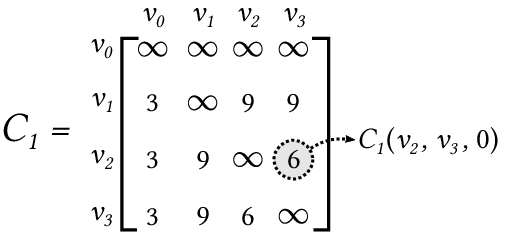}
\vspace{1.0em}
\caption{
\small
The matrix $C_1$ for the
phase $P_1$ of the aggregation problem in
Figure~\ref{fig:graphexample} that has a single partition. 
The example assumes $w$ is equal to the bandwidth $B$.
Rows represent the sender and columns represent the receiver.
The circled value corresponds to the aggregation $v_2 \rightarrow v_3$
where $v_2$ sends out $\left \{ D,E,F \right \}$ to aggregate with
$\left \{ D,E,F \right \}$ in $v_3$.
}
\label{fig:costexample}
\end{figure}


\newcommand{\grinf}{\color{lightgray}\infty}
\newcommand{\vare}{\varepsilon}
\newcommand{\nsp}{\hspace{-0.75em}}

\subsection{Selecting aggregation pairs}
\label{sec:alg:aggselect}

{
\begin{algorithm}[b]
\small
\caption{Selecting data transfers for phase $P_i$}
\label{alg:aggselect}
\textbf{Input}
$C_i$: the cost function defined in Eq.~\ref{eq:cost}.
\\
\hspace{2.7em}
$V_{send}$: candidate nodes to send 
\\
\hspace{2.7em}
$V_{recv}$: candidate nodes to receive
\\
\hspace{2.7em}
$V_{l}$: candidate nodes to operate on partition $l$
\\
\textbf{Output}
$P_i$: the next aggregation phase\\

\nl 
$P_i \leftarrow \varnothing$; 
$V_{send} \leftarrow V_C$; 
$V_{recv}\leftarrow V_C$;
$V_l \leftarrow V_C$

\nl \While{$|V_{send}| > 0$ and $|V_{recv}| > 0$} {\label{line:terminate}

\nl Pick $\langle s \rightarrow t, l \rangle$ such that
\\
\hspace{1.5em}$s \in \left(V_{send} \cap V_l \right)$, 
$t \in \left(V_{recv} \cap V_l \right)$ and
\\
\hspace{1.5em}$C_i(s, t, l)$ has the minimum value in $C_i$\label{line:pickedge}

\nl\If{$C_i(s, t, l) = \infty$}
{
	\nl break\label{line:break}
}

\nl Remove $s$ from $V_{send}$ and $V_{l}$, if found \\
\nl Remove $t$ from $V_{recv}$ and $V_{l}$, if found \label{line:removenode}

\nl Add $\langle s \rightarrow t, l \rangle$ to $P_i$ \label{line:addedge}

\nl \textsc{Update}$(s,t,l)$ \label{line:updatestat}
}

\nl \Return $P_i$

\end{algorithm}
}

\noindent
This section describes step~\ref{step:aggselect} in
Figure~\ref{fig:systeoverview} which selects transfers among 
all possible pairs to
produce one aggregation phase $P_i$.
There are three aspects for consideration when
selecting candidate aggregations:

\vspace{0.5em}
\noindent \textbf{1)} 
\textbf{In each phase, how many transfers does a node participate in?}
Prior work shows that uncoordinated network communication leads
to congestion in the network~\cite{wolf15vldb, Rodiger:14icde}.
R{\"{o}}\-di\-ger et al.~\cite{wolf15vldb} do application-level
scheduling by dividing communication into stages to improve throughput,
where in each stage a server has a single target to send to and a single
source to receive from.
Like prior work, GRASP restricts the communication within one phase 
to minimize network contention.
Specifically, GRASP picks transfers such that
one node sends to at most one node and receives from at most one
node in each aggregation phase.

\vspace{0.5em}
\noindent \textbf{2)} 
\textbf{How many nodes are selected for aggregation in one phase?}
In order to maximize the network utilization, GRASP picks as many data
transfers as possible in one phase until the available bandwidth $B$ is
depleted.

\vspace{0.5em}
\noindent \textbf{3)} 
\textbf{Given many candidate aggregation pairs,
which aggregation should one choose within one phase?}
GRASP minimizes the $C_i$ function
defined in Equation~\ref{eq:cost} and
selects aggregations by 
picking the smallest $C_i$ values.
\vspace{0.5em}

Algorithm~\ref{alg:aggselect} shows how GRASP
selects candidate aggregations for one phase $P_i$.
$V_{send}$ is the set of candidate nodes to be senders, 
$V_{recv}$ is the set of candidate nodes to be receivers, 
and $V_l$ is the nodes that can operate on partition $l$.
The algorithm picks the aggregation pair
which has smallest value in $C_i$ (line~\ref{line:pickedge}).
The algorithm then removes the selected nodes from the candidate node sets
(lines~6-\ref{line:removenode}) to enforce that
(a) one node only sends to or receives from at most one node, and 
(b) one node does not send and receive data for the same partition
within the same phase.
Then, the transfer $s \to t$ for partition $l$ is added to the
aggregation phase $P_i$ (line~\ref{line:addedge}).
GRASP calls the function \textsc{Update}$(s,t,l)$, which was defined in
Algorithm~\ref{alg:minhashfunc}, to update
the minhash signatures and the cardinalities 
in arrays \texttt{MinH} and \texttt{Card} (line~\ref{line:updatestat}),
as data in $s$ and $t$ will change after the aggregation.
The algorithm stops when either candidate set is empty
(line~\ref{line:terminate}) or there are no more viable transfers in this
phase (line~\ref{line:break}).

\begin{figure}[t]
\centering
\includegraphics[scale=0.9]{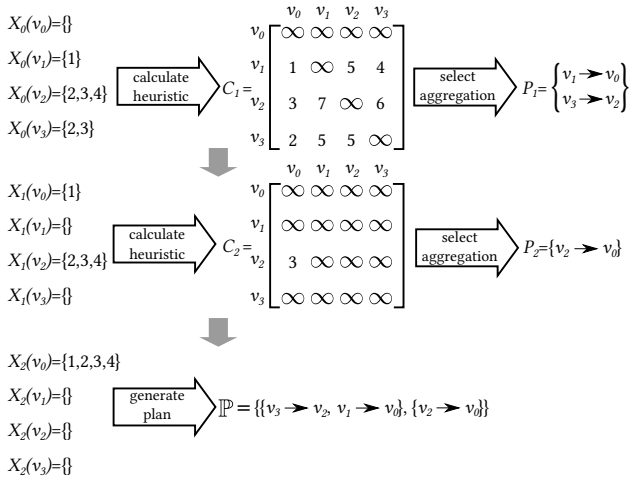}
\vspace{1em}
\caption{An example of how GRASP generates aggregation plans for an all-to-one aggregation
with a single partition.}
\label{fig:algoexample}
\vspace{-1em}
\end{figure}

Figure~\ref{fig:algoexample} shows an example of how GRASP selects
aggregations using the $C_i$ cost function.
For simplicity,
we again show an all-to-one aggregation with a single partition $l=0$, and 
we assume the bandwidth $B$ to be equal to the tuple width $w$.
In the first iteration, the coordinator constructs the matrix
$C_1$ from the cost function described
in Section~\ref{sec:alg:heuristic}.
For example, assume in the first phase $|X_0(v_2)| = 3$ and $|X_0(v_2)
\cup X_0(v_3)| = 3$, then $C_1(v_2, v_3, 0) = 6$.
After constructing the cost matrix $C_1$, 
GRASP picks data transfers for aggregation using Algorithm~\ref{alg:aggselect}.
The first pick is $v_1 \rightarrow v_0$ because it has the least cost.
Because a transfer has now been scheduled on the $v_1 \rightarrow v_0$
link, GRASP eliminates $v_1$ and $v_0$ from the
corresponding candidate sets.
GRASP then picks $v_3 \rightarrow v_2$. 
GRASP then finishes this phase because
there are no candidates left, and appends the aggregation phase 
$P_1 = \left \{ v_1 \rightarrow v_0, v_3 \rightarrow v_2\right \}$ to
the aggregation plan $\mathbb{P}$.
In the next iteration, GRASP constructs matrix $C_2$ and picks the
last data transfer $v_2 \to v_0$ for phase $P_2$. 
At this point, all data will have been aggregated to the
destination nodes so the aggregation plan $\mathbb{P}$ will be scheduled
for execution.


\newtheorem{remark}{Remark}[section]

\section{Hardness of Approximation}\label{sec:hardness}

\noindent
Many hard problems are amenable to efficient approximation algorithms
that quickly find solutions that are within a guaranteed distance to the
optimal.
For instance, $2$-ap\-prox\-i\-ma\-tion algorithms ---polynomial algorithms
that return a solution whose cost is at most twice the optimal--- are
known for many NP-hard minimization problems.
A natural question to ask is how closely does GRASP approximate the 
optimal solution to the aggregation problem. 

This section proves that it is not
feasible to create a polynomial algorithm that approximates the optimal
solution to the aggregation problem within any constant factor.
In other words, the aggregation problem is not only NP-hard
but it also cannot be approximated within any constant factor by any
polynomial algorithm, including GRASP.
This hardness of approximation result is much stronger than
simply proving the NP-hardness of the problem, as many NP-hard problems
are practically solvable using approximation.%

The proof is structured as follows.
Section \ref{subsec:bandwidth} introduces an assumption
regarding the cost of using shared network links.
Section \ref{subsec:SSE} defines the Small Set
Expansion (SSE) problem and the well-established SSE conjecture. 
Section \ref{subsec:NP-one} starts with an instance of SSE and reduces
it to the all-to-one aggregation problem. 
This proves that the all-to-one aggregation problem is NP-hard to
approximate, assuming the SSE conjecture.
Section \ref{subsec:NP-all} proves that the all-to-all
aggregation problem is also NP-hard to approximate.

\subsection{Link sharing assumption}\label{subsec:bandwidth}

\noindent
Whereas GRASP 
will never schedule concurrent data transfers on the same link in one
phase in a star network, the theoretical proof needs a mechanism to
assess the runtime cost of sharing a network link for multiple transfers.
Our proof makes the fair assumption that the cost of sending data from one
node to another is proportional to the total data volume that
is transferred over the same link across all aggregations in this phase.

One way to incorporate link sharing information in the cost calculation
is to account for the number of concurrent data transfers on the $s \to t$ 
path when computing the available bandwidth $B(s \to t)$.
For example, for the network topology shown in
Figure~\ref{fig:graphexample} the available bandwidth from $s$ to
$t$, $B\left( s \rightarrow t \right)$ can be calculated as:

\begin{equation}
B\left(s \rightarrow t\right ) = min\left(\frac{W\left(\langle s, v_R \rangle\right )}{d_o\left(s\right )}, \frac{W\left(\langle v_R, t \rangle\right )}{d_i\left(t\right )} \right )
\label{eq:linkbandwidth}
\end{equation}

\noindent
where $W\left(\langle s, v_R\rangle\right )$ and $W\left(\langle v_R,
t\rangle\right )$ are the network bandwidths of the links,
$d_o\left(s\right )$
denotes the number of data transfers using the $\langle s, v_R \rangle$ link and
$d_i\left(t\right )$
denotes the number of data transfers using the $\langle v_R, t \rangle$
link in this phase.

\subsection{The Small Set Expansion Problem}\label{subsec:SSE}
\noindent
This subsection defines the Small Set Expansion (SSE) conjecture
\cite{raghavendra2010graph}. We first briefly discuss the 
intuition behind this problem 
then give a formal definition.

\subsubsection{Intuition}
\noindent
A $d$-regular graph is a graph where each vertex has $d$ edges for some
integer $d \geq 1$. 
The Small Set Expansion problem asks if there exists a small subset of
vertices that can be easily disconnected from the rest in a 
$d$-regular graph. 
The SSE conjecture states that it is NP-hard to distinguish between the
following two cases: 
(1) The \textbf{YES} case, there exists
some small set of vertices that can be disconnected from the graph. 
(2) The \textbf{NO} case, such a set does not exist. In other words,
in this case
every set of vertices has a relatively large boundary to the other
vertices in the graph.

Note that the SSE conjecture is currently open, as it has not been
proven or disproven yet. 
Just like the well-known $\mathbf{P} \neq \mathbf{NP}$ conjecture,
the theory community has proceeded to show that many problems are hard to
approximate based on the general belief that the SSE conjecture is true.
Significant hardness of approximation results that assume the SSE
conjecture include the treewidth and
pathwidth of a graph \cite{austrin2012inapproximability}, 
the Minimum Linear Arrangement (MLA) and the $c$-Balanced Separator
problem \cite{raghavendra2012reductions}.

\subsubsection{Formal Definition}
\noindent
Let $G$ be an undirected $d$-regular graph.
For any subset of vertices $S \subseteq V(G)$, we define the \emph{edge
expansion} of $S$ to be $\Phi(S) = \frac{E(S,V\!\setminus\! S)}{d|S|}$.

\begin{definition}
Let $\rho \in [-1,1]$. Let $\Phi^{-1}$ be the inverse function of the
normal distribution. Let $X$ and $Y$ be jointly normal random variables
with mean $0$ and covariance matrix 
$\begin{pmatrix} 
   1 & \hspace{-0.75em} \rho\\ 
\rho & \hspace{-0.75em} 1 
\end{pmatrix}$. 
We define $\Gamma_\rho \colon [0,1] \to [0,1]$ as
$
\Gamma_\rho(\mu) = \Pr[X \leq \Phi^{-1}(\mu) \land Y \leq \Phi^{-1}(\mu)]
$.
\end{definition}


\begin{conjecture}[The Small Set Expansion conjecture \cite{raghavendra2010graph}]\label{conj:sse}
For every integer $q > 0$ and $\varepsilon, \gamma > 0$, it is NP-hard
to distinguish between the following two cases:
\begin{description}
\item{\textbf{YES}} There is a partition of $V(G)$ into $q$ equi-sized
sets $S_1,\ldots,S_q$ such that 
$\Phi(S_i) \leq 2\varepsilon$, $\forall i \in \{1,\ldots,q\}$.

\item{\textbf{NO}} For every $S \subseteq V(G)$ we have $\Phi(S) \geq 1
- (\Gamma_{1-\varepsilon/2}(\mu) + \gamma)/\mu$, where $\mu =
|S|/|V(G)|$.
\end{description}
\end{conjecture}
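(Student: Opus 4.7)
The statement in question is Conjecture 4.2, the Small Set Expansion (SSE) conjecture of Raghavendra and Steurer. This is an open problem in complexity theory, not a theorem: no unconditional proof is currently known, and the paper invokes it purely as a working hypothesis to power the hardness results in Sections 4.3 and 4.4. Any ``proof plan'' I give is therefore speculative, and I will describe the shape a serious attempt would have to take rather than a route I can actually execute.

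The natural approach would be a gap-preserving reduction from a canonical NP-hard problem such as 3-SAT or Label Cover. I would fix the parameters $q$, $\varepsilon$, $\gamma$ appearing in the conjecture and, from a gap instance $\varphi$, construct a $d$-regular graph $G_\varphi$ by replacing each variable (or Label Cover vertex) with a block of vertices indexed by a long code, placing edges across blocks according to a noise-correlated distribution with correlation parameter $1-\varepsilon/2$. In the \textbf{YES} case, a satisfying assignment to $\varphi$ should induce a partition $S_1, \ldots, S_q$ of $V(G_\varphi)$ into equi-sized ``dictator'' sets of edge expansion at most $2\varepsilon$, by direct calculation on the noise operator. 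In the \textbf{NO} case, the goal is to show that every $S \subseteq V(G_\varphi)$ of fractional size $\mu$ has expansion at least $1 - (\Gamma_{1-\varepsilon/2}(\mu) + \gamma)/\mu$, pushing a Gaussian-analytic soundness argument through an invariance principle in the spirit of Mossel--O'Donnell--Oleszkiewicz and appealing to Borell's isoperimetric inequality to obtain the tight Gaussian bound $\Gamma_{1-\varepsilon/2}$.

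The decisive obstacle, and the reason the statement has resisted a proof for more than a decade, is the \textbf{NO}-case soundness analysis. The function $\Gamma_{1-\varepsilon/2}(\mu)/\mu$ is exactly the tight Gaussian noise-stability threshold given by Borell's theorem, so any gadget reduction must exhibit a soundness that matches the Borell optimum. Existing techniques achieve this tight bound only when the starting problem is already Unique Games, which yields SSE under UGC but not unconditionally; moreover, SSE is now known to imply UGC, so a proof of SSE would be at least as strong as resolving UGC itself. I would therefore not attempt an unconditional proof, and, following the authors, would take Conjecture 4.2 as given and invest the available effort instead in the downstream reductions from SSE to the all-to-one and all-to-all aggregation problems in Sections 4.3 and 4.4.
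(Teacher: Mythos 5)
You are correct: the paper does not prove this statement — it is stated as an open conjecture (due to Raghavendra and Steurer), explicitly noted as neither proven nor disproven, and is assumed as a hypothesis for the downstream hardness reductions. Your identification of it as such, and your decision to treat it as given rather than attempt an unconditional proof, matches the paper's treatment exactly.
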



\begin{remark}\label{rem:yes}
In the \textbf{YES} case, the total number of edges that are not
contained in one of the $S_i$ sets is at most $2 \varepsilon |E|$.
\end{remark}

\begin{remark}\label{rem:no}
In the \textbf{NO} case, for every $S \subseteq V(G)$ with $|V(G)|/10
\leq |S| \leq 9|V(G)|/10$, we have $|E(S,V(G)\setminus S)| \geq c
\sqrt{\varepsilon}|E(G)|$, for some constant $c > 0$.
\end{remark}

\subsection{Hardness of the aggregation problem}\label{subsec:NP-one}
\noindent
Before stating the formal inapproximability result, we first provide the 
intuition behind our proof strategy approach. We then reduce the SSE
problem to the all-to-one aggregation problem. Finally, we show that
the all-to-all problem is a straightforward generalization of the all-to-one problem.

\subsubsection{Intuition}
\noindent
We now give a brief intuitive overview of the proof. Recall that in the
SSE problem we are given a graph $G$ and the goal is to decide whether
$G$ admits a partition into small subgraphs, each having a small boundary
(a SSE partition henceforth), or $G$ is an expander at small scales,
that is, all small subgraphs of G have a large boundary. The SSE
conjecture asserts that this problem is hard to approximate, and has
been used to show the inapproximability of various graph optimization
problems \cite{austrin2012inapproximability}. Inspired by these results,
we show that the all-to-one aggregation problem is hard to approximate
by reducing the SSE problem to it. Our proof strategy is as follows. We
begin with an instance $G'$ of the SSE problem. We encode $G'$ as an
instance of the all-to-one aggregation problem by interpreting each node
of $G'$ as a leaf node in the star network, and each edge
$\langle u,v \rangle$ of $G$ as a data item which is replicated in nodes $u$ and $v$
in the aggregation problem. We show that any partition of $G$ can be
turned into an aggregation protocol, and, conversely, any aggregation
protocol can be turned into a partition of $G$. The key intuition is
that the cost of the partition is related to the cost
of the aggregation via the observation that the data items that need to
be transmitted twice are exactly the edges that are cut by the
partition.

\subsubsection{Formal proof for the all-to-one aggregation}
\noindent
Suppose that we are given an all-to-one aggregation instance: a graph
$G$, a single destination vertex $v^* \in V(G)$, and the data $X_0(v)$
in each node $v \in V(G)$. 
Let $X = \bigcup_{v \in V(G)} X_0(v)$ be the set of all data.
Let $\mathbb{P} = \{P_1,P_2,\ldots,P_n\}$ be an execution
plan. For every $P_i = \{s_1 \rightarrow t_1, \ldots, s_k \rightarrow
t_k\} \in \mathbb{P}$, let $S(P_i) = \{s_1,\ldots,s_k\}$ and $T(P_i) =
\{t_1,\ldots,t_k\}$. 

We define the \emph{overhead cost} of $\mathbb{P}$
to be $\texttt{COST}\left(\mathbb{P}\right ) - |X|$. 
Under the all-to-one aggregation model, 
every execution plan is obtained from an aggregation tree.
To simplify the proof, we assume that one node sends
data to only one node within a phase. 
This modeling assumption is acceptable from a theoretical standpoint as
one can represent a phase where a node transmits data to multiple
destinations as a sequence of discrete phases to each individual destination.
We say that $\mathbb{P}$ is obtained from an \emph{aggregation tree}
$T_{P}$, if the following conditions hold: 

\begin{enumerate}[leftmargin=1.5em, itemsep=0em, topsep=0em]
\item
$T_P$ is a spanning tree of $G$, rooted at $v^*$. 
\item
The leaf vertices of $T_P$ are exactly the elements of $S(P_1)$. 
Furthermore, for every $i \in \{2,\ldots,k-1\}$, the leaf vertices of
$T_P \setminus \bigcup_{1 \leq j<i} S(P_j)$ are exactly the elements of
$S(P_i)$. 
\end{enumerate}


\begin{theorem}\label{thm:main}
For every $\varepsilon > 0$, given an aggregation instance $\big(G,
v^* \in V(G), X_0(v) \; \forall v \in V(G) \big)$, it
is SSE-hard to distinguish between the following two cases:
\begin{description}[itemsep=0em]

\item{\textbf{YES}} There exists an execution plan that is
obtained from an aggregation tree with overhead cost $O(\varepsilon
|X|)$.
\item{\textbf{NO}} Every execution plan that is obtained
from an aggregation tree has overhead cost $\Omega(\sqrt{\varepsilon}
|X|)$.

\end{description}
\end{theorem}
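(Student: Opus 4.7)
The plan is to reduce the Small Set Expansion problem to all-to-one aggregation. Given a $d$-regular SSE instance $G'$, I construct an aggregation instance on a star network whose compute nodes are $V(G') \cup \{v^*\}$, with $v^*$ as the destination. For each edge $e = \langle u, v \rangle \in E(G')$ I introduce a unique data item placed initially at both $u$ and $v$, so that $|X| = |E(G')|$ and every vertex $u$ starts with exactly $d$ items. The central correspondence driving both directions is that the two copies of an edge merge inside the aggregation tree if and only if their endpoints lie in the same subtree at the merging ancestor; hence the duplicate transmissions that inflate the overhead correspond exactly to edges that are cut by the partition which the aggregation tree induces on $V(G')$.

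For the YES direction, I would take the SSE partition $S_1,\ldots,S_q$ with $\Phi(S_i) \le 2\varepsilon$ and build a two-stage plan: first aggregate within each $S_i$ to a representative $r_i$, then have each $r_i$ transmit to $v^*$. After the first stage $r_i$ holds exactly the edges incident to $S_i$, so summing $|X_1(r_i)|$ over $i$ counts every internal edge once and every boundary edge twice. By Remark \ref{rem:yes} the total boundary is at most $2\varepsilon|E|$, so the volume arriving at $v^*$ is $|E| + O(\varepsilon|E|)$, giving overhead $O(\varepsilon |X|)$.

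For the NO direction, I would take any aggregation tree $T_P$ rooted at $v^*$ and extract a balanced partition via a centroid-style argument: choose an internal node $w$ for which the leaf set $S$ of the subtree $T_w$ satisfies $|V(G')|/3 \le |S| \le 2|V(G')|/3$, which in particular lies in $[|V(G')|/10, 9|V(G')|/10]$. For any edge $e = \langle u, v \rangle$ cut by $(S, V(G')\setminus S)$, the $u$-copy of $e$ must ascend within $T_w$ while the $v$-copy ascends outside $T_w$, so the two copies cannot merge until strictly above $w$; I will argue that this forces at least one additional unit of transmitted volume beyond the baseline $|X|$. Invoking Remark \ref{rem:no} then supplies $\Omega(\sqrt{\varepsilon}|E|)$ cut edges, which lower-bounds the overhead by $\Omega(\sqrt{\varepsilon}|X|)$.

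The main obstacle will be the reverse direction: aggregation trees can be deep and irregular, and I must argue carefully that the cut edges at level $w$ each contribute a distinct, uncompensated unit of duplication, without double-counting against transfers executed elsewhere in $\mathbb{P}$. The centroid-like choice of $w$ is the essential technical device that lets Remark \ref{rem:no} apply with a balanced set $S$; once the all-to-one case is established in this form, the extension to all-to-all aggregation follows immediately since all-to-one is the special case $|L|=1$ with $M$ mapping the single partition to $v^*$.
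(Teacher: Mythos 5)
Your reduction is the same as the paper's: each edge of the $d$-regular SSE graph $G'$ becomes a data item replicated at its two endpoints, so $|X|=|E(G')|$ and duplicated transmissions correspond to cut edges. Your YES direction also matches the paper's (aggregate each $S_i$ to a representative, then ship everything to $v^*$, with Remark \ref{rem:yes} bounding the duplicates by $2\varepsilon|E|$); just note that you must also charge the first phase, whose cost is $d|S_i|=2\varepsilon|E|$ and hence absorbed into the $O(\varepsilon|X|)$ overhead.

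The NO direction has a genuine gap. First, the balanced internal node $w$ you rely on need not exist: in the star-shaped aggregation tree where every node is a child of $v^*$ (plain repartitioning), every proper subtree has a single vertex, so no subtree has between $|V|/3$ and $2|V|/3$ vertices. (Also, the relevant set is $V(T_w)$ rather than the leaf set of $T_w$, since in this construction every tree node holds data initially.) Second, even when such a $w$ exists it may sit deep in the tree; then the two copies of an edge cut by $(V(T_w),V\setminus V(T_w))$ merge at some internal ancestor $a$ of $w$, and the extra unit of volume is carried on the link into $a$, not on the link into $v^*$. Under the cost model $\texttt{COST}(\mathbb{P})=\sum_i\max_j\texttt{COST}(s_j\to t_j)$ you can lower-bound the total cost by the total volume entering any \emph{single} node, but you cannot add the $|X|$ baseline on $v^*$'s incoming link to duplicates charged to $a$'s incoming link, so $\texttt{COST}(\mathbb{P})\ge|X|+\Omega(\sqrt{\varepsilon}|X|)$ does not follow. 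The paper sidesteps both issues by cutting only at the children of $v^*$: if some child subtree $T_v$ satisfies $|V(T)|/10\le|V(T_v)|\le 9|V(T)|/10$, Remark \ref{rem:no} yields $c\sqrt{\varepsilon}|E|$ cut edges, every one of which arrives twice at $v^*$ itself; otherwise the child subtrees are badly unbalanced and a direct count shows that a constant fraction of $|E|$ is transmitted at least twice to $v^*$, which dominates $\sqrt{\varepsilon}|E|$. You should restructure your case analysis around the children of $v^*$ in this way.
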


\begin{proof}
We start with an instance of SSE with $q = 1/\varepsilon$, and reduce it
to our problem. Let $G'$ be the $d$-regular graph of the SSE instance.
We construct an aggregation instance as follows.
Let $V(G) = V(G')$, and $X = E(G')$. Note that $G$ is a complete graph
with the same vertex set as $G'$. For every $v \in V(G)$, let $X_0(v) =
\{\langle u,w\rangle \in X: v=u \lor v=w\}$ be the set of data that is held by $v$.

In the \textbf{YES} case of the SSE instance, we have disjoint sets
$S_1,S_2,\ldots,S_q$ of equal size. For every $i \in \{1,2,\ldots,q\}$,
we have $|S_i| = |V(G)|/q = \varepsilon |V(G)|$. We may assume w.l.o.g.~that $v^*
\in S_q$. For every $i \in \{1,2,\ldots,q-1\}$, pick an arbitrary vertex
$v_i \in S_i$. Let also $v_q = v^*$. For every $j \in \{1,2,\ldots,q\}$,
let $\{s_{i,1},\ldots,s_{i,i_j}\} = S_j \setminus \{v_j\}$.
We first construct an aggregation tree $T$ as follows. For every $i \in
\{1,2,\ldots,q\}$, let $v_i$ be the parent of all other vertices in
$S_i$. Let $v_q$ be also the parent of $v_1,v_2,\ldots,v_{q-1}$.

Now consider the execution plan corresponding to $T$. This
aggregation has two phases: $\mathbb{P} = \{P_1,P_2\}$. First
we describe $P_1$. 
For each $S_i$, we aggregate all the data held by vertices of $S_i$ to
$v_i$; that is every vertex in $S_i$ (except $v_i$ itself) transfers its
dataset to $v_i$. This can be done simultaneously for all $S_i$'s, since
$S_i$'s are disjoint sets. We have that $P_1 = \{s_{1,1}\!\rightarrow\!v_1,\; 
s_{1,2}\!\rightarrow\!v_1,\; ...,\; s_{1,i_1}\!\rightarrow\!v_1,\; 
s_{1,2}\!\rightarrow\!v_2,\; ...,\; s_{1,i_2}\!\rightarrow\!v_2,\; 
...,\; 
s_{q,1}\!\rightarrow\!v_q,\; ...,\; s_{q,i_q}\!\rightarrow\!v_q \}$.


By the construction, at the beginning for each vertex $v$ we have that
$|X_0(v)| = d$. Therefore, for every $S_i$, the total volume of data to
be transferred to $v_i$ is $2\varepsilon |E(G)|=d\varepsilon |V(G)|=d|S_i|$.
In other words, for every $(s_{i,j} \rightarrow v_i) \in P_1$, we have
that $\texttt{COST}(s_{i,j} \rightarrow v_i) = 2\varepsilon |E(G)|$, and
thus we have $\texttt{COST}(P_1) = 2\varepsilon |E(G)|$.

In the second phase of the execution plan, for every $i \in
\{1,2,\ldots,q-1\}$, we need to transfer all the data held by $v_i$ to
$v^*$. This can be done simply by sending one data at a time to $v^*$.
We have:
\[
P_2 = \{v_1 \rightarrow v_q, v_2 \rightarrow v_q ,\ldots, v_{q-1} \rightarrow v_q\}
\]

By Remark \ref{rem:yes}, the total number of tuples that are
transferred more than once in this phase is at most $\varepsilon d |V(G)|
=2\varepsilon|E(G)|$. This means that $\texttt{COST}(P_2) \leq
(1+2\varepsilon) |E(G)|$. Therefore we have that
$\texttt{COST}(\mathbb{P}) \leq (1+4\varepsilon)|E(G)|$, and thus the
overhead cost of this execution plan is $O(\varepsilon|E(G)|)$.

In the \textbf{NO} case, we want to show that every execution
plan that is obtained from an aggregation tree has cost
$\Omega(\sqrt{\varepsilon}|E|)$. Let $\mathbb{P}$ be an 
execution plan that is obtained from an aggregation tree $T$. For every
$v \in V(T)$, let $T_v$ be the subtree of $T$ rooted at $v$.

Suppose that $v^*$ has a child $v$ such that $|V(T)|/10 \leq |V(T_v)|
\leq 9|V(T)|/10$. 
We apply Remark \ref{rem:no} by setting $S = T_v$. 
We have that $E(S,V(G)\setminus S) \geq c \sqrt{\varepsilon}|E(G)|$, for
some constant $c > 0$. 
This means that there are at least $c \sqrt{\varepsilon}|E(G)|$ data
that are going to be sent at least twice to $v^*$ in the execution plan,
or $\texttt{COST}(\mathbb{P}) = \Omega((1
+\sqrt{\varepsilon})|E(G)|)$. 
Thus, the overhead cost of this execution plan is
$\Omega(\sqrt{\varepsilon}|E(G)|)$.

Otherwise, $v^*$ has a child $v$ such that $|V(T_v)| < |V(T)|/10$. 
In this case, there are at least $9|E(G)|/10$ data in $T_v$ that are going 
to be transferred at least twice to get to $v^*$ in the execution plan.
Therefore, we have $\texttt{COST}(\mathbb{P}) = \Omega((0.9 +
0.9)|E(G)|)$, and thus the overhead cost of this execution plan is
clearly $\Omega(\sqrt{\varepsilon}|E(G)|)$. This completes the proof.
\end{proof}

\vspace*{-0.75em}
\begin{corollary}
Assuming Conjecture \ref{conj:sse}, it is NP-hard to approximate the
minimum overhead cost of an all-to-one aggregation plan that
is obtained from an aggregation tree within any constant factor.
\end{corollary}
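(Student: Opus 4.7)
The plan is to derive the corollary via a standard gap-amplification argument from Theorem \ref{thm:main}. Suppose, for contradiction, that there exists a polynomial-time algorithm $A$ that approximates the minimum overhead cost within some fixed constant factor $c \geq 1$. I will show how to use $A$ to build a polynomial-time distinguisher between the \textbf{YES} and \textbf{NO} cases of Theorem \ref{thm:main}, which contradicts Conjecture \ref{conj:sse}.

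First, fix the absolute constants $C_1, C_2 > 0$ implicit in the asymptotic bounds $O(\varepsilon |X|)$ and $\Omega(\sqrt{\varepsilon}\,|X|)$ provided by Theorem \ref{thm:main}. The key observation is that the ratio between the \textbf{NO} lower bound and the \textbf{YES} upper bound is $C_2 \sqrt{\varepsilon}/(C_1 \varepsilon) = (C_2/C_1)\,\varepsilon^{-1/2}$, which tends to infinity as $\varepsilon \to 0$. Thus, given $c$, I pick a constant $\varepsilon > 0$ small enough that $c \cdot C_1 \varepsilon < C_2 \sqrt{\varepsilon}$; any $\varepsilon < (C_2/(c C_1))^2$ works. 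Crucially, this choice depends only on $c$, $C_1$, and $C_2$, and not on the input size.

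Next, given an arbitrary SSE instance, apply the reduction inside the proof of Theorem \ref{thm:main} with this fixed $\varepsilon$ to obtain, in polynomial time, an aggregation instance. Run $A$ on this instance. In the \textbf{YES} case the optimum overhead is at most $C_1 \varepsilon |X|$, so $A$ returns a solution with overhead at most $c \cdot C_1 \varepsilon |X|$. In the \textbf{NO} case every valid aggregation plan has overhead at least $C_2 \sqrt{\varepsilon}\,|X|$, so $A$ cannot do better than this lower bound. By the choice of $\varepsilon$, the two intervals are disjoint, and a single threshold test on $A$'s output decides SSE in polynomial time, contradicting Conjecture \ref{conj:sse}.

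The main technical point to verify — and I expect it to be the only real obstacle — is that the whole pipeline remains polynomial after fixing $\varepsilon$ as a function of $c$. The SSE parameter $q = 1/\varepsilon$ from the reduction in Theorem \ref{thm:main} becomes a constant depending only on $c$, which is exactly the regime in which Conjecture \ref{conj:sse} asserts hardness, and the reduction itself is polynomial in $|V(G')|$. Since no other part of the argument introduces dependence on the input size through $\varepsilon$, the reduction and the call to $A$ both run in polynomial time, and the contradiction goes through for every constant $c$, establishing the corollary.
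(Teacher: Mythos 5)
Your argument is correct and is exactly the standard gap-to-inapproximability reduction that the paper leaves implicit: the corollary is stated without proof as an immediate consequence of the $O(\varepsilon|X|)$ versus $\Omega(\sqrt{\varepsilon}|X|)$ gap in Theorem~\ref{thm:main}, and your choice of $\varepsilon$ as a constant depending only on the approximation factor $c$ (so that $q=1/\varepsilon$ stays constant and the reduction stays polynomial) is precisely the right way to make that implication rigorous.
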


\vspace*{-0.75em}
\begin{corollary}\label{cor:mainresult}
Assuming Conjecture \ref{conj:sse}, it is NP-hard to find an all-to-one
aggregation plan that is obtained from an aggregation tree
with minimum cost.
\end{corollary}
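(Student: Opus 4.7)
The plan is to derive Corollary \ref{cor:mainresult} as an immediate consequence of Theorem \ref{thm:main}, by arguing that exact minimization is at least as hard as the gap version already established. Specifically, I would argue by contradiction: suppose there is a polynomial-time algorithm $\mathcal{A}$ that, given any all-to-one aggregation instance, returns a plan obtained from an aggregation tree with minimum cost. I will then show that $\mathcal{A}$ can be used to solve the gap problem from Theorem \ref{thm:main}, contradicting the SSE conjecture.

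First I would fix a target $\varepsilon > 0$ and, starting from an SSE instance with parameter $q = 1/\varepsilon$, apply the reduction already constructed in the proof of Theorem \ref{thm:main} to produce an aggregation instance $(G, v^*, \{X_0(v)\}_{v \in V(G)})$. The reduction is clearly polynomial in the size of the SSE instance, since $V(G) = V(G')$ and $X = E(G')$, and each $X_0(v)$ is just the list of edges of $G'$ incident to $v$. I would then run $\mathcal{A}$ on this instance to obtain an optimal execution plan with cost $\texttt{COST}^*$, and compare $\texttt{COST}^*$ against the threshold $\tau = (1 + c_1\varepsilon)|X|$ for the constant $c_1$ hidden in the $O(\varepsilon|X|)$ bound of Theorem \ref{thm:main}.

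The correctness of the dichotomy follows directly from Theorem \ref{thm:main}. In the \textbf{YES} case, the theorem guarantees an execution plan with overhead $O(\varepsilon |X|)$, so the optimum satisfies $\texttt{COST}^* \leq \tau$. In the \textbf{NO} case, every execution plan has overhead $\Omega(\sqrt{\varepsilon}|X|)$, hence $\texttt{COST}^* \geq (1 + c_2\sqrt{\varepsilon})|X|$ for some constant $c_2 > 0$. Choosing $\varepsilon$ small enough that $c_1\varepsilon < c_2\sqrt{\varepsilon}$ (which holds whenever $\varepsilon < (c_2/c_1)^2$) ensures that the two cases are separated by $\tau$. Comparing $\texttt{COST}^*$ to $\tau$ therefore distinguishes \textbf{YES} from \textbf{NO} in polynomial time, contradicting Conjecture \ref{conj:sse}.

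There is no real obstacle for this corollary, since it is a strict weakening of Theorem \ref{thm:main}: any algorithm that returns an exact minimizer trivially achieves every constant-factor approximation ratio, so hardness of approximation already implies NP-hardness of exact optimization. The only item requiring attention is bookkeeping — namely, that the overhead cost in the \textbf{YES} case is bounded by $c_1\varepsilon|X|$ with a constant independent of the SSE instance, which is already made explicit in the proof of the theorem where the total overhead was shown to be $O(\varepsilon|E(G')|) = O(\varepsilon|X|)$.
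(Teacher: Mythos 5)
Your proposal is correct and matches the paper's (implicit) reasoning: the paper states this corollary without proof as an immediate consequence of Theorem~\ref{thm:main}, relying on exactly the standard argument you spell out, namely that an exact polynomial-time minimizer would let one compare the optimal cost against a threshold separating the \textbf{YES} case (overhead $O(\varepsilon|X|)$) from the \textbf{NO} case (overhead $\Omega(\sqrt{\varepsilon}|X|)$) for sufficiently small $\varepsilon$, thereby deciding the SSE gap problem. Your added bookkeeping about the constants $c_1, c_2$ and the choice $\varepsilon < (c_2/c_1)^2$ is a correct and slightly more explicit rendering of the same step.
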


One might ask if it is feasible to brute-force the problem for small
graphs by enumerating all possible aggregation trees and picking the
best solution. 
Unfortunately this would be extremely expensive even for small graphs. 
Cayley's formula \cite{cayley1889} states that the number of different
spanning trees of graph with $n$ vertices is $n^{n-2}$.
Hence, even for $n=20$ one needs to enumerate $20^{18} \geq 10^{23}$
different trees.

%

\subsubsection{Formal proof for the all-to-all aggregation}\label{subsec:NP-all}
\noindent
The more general case is the all-to-all aggregation problem. 
We observe that the all-to-one aggregation problem can be trivially reduced to the
all-to-all aggregation problem,
since by the definition, every instance of the all-to-one aggregation
problem is also an instance of the all-to-all aggregation problem.

\begin{theorem}\label{thm:main_result}
Assuming Conjecture \ref{conj:sse}, it is NP-hard
to find an all-to-all aggregation plan with minimum cost.
\end{theorem}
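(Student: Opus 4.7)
The plan is to establish Theorem \ref{thm:main_result} by a direct reduction from the all-to-one aggregation problem to the all-to-all aggregation problem, leveraging Corollary \ref{cor:mainresult} which already gives us the NP-hardness of the all-to-one version (under the SSE conjecture). The strategy rests on the observation that an all-to-one instance is essentially a degenerate all-to-all instance, and that the cost semantics agree on such instances, so any polynomial-time algorithm that optimally solved all-to-all would also optimally solve all-to-one.

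Concretely, I would proceed as follows. Given an all-to-one instance $\big(G, v^* \in V(G), \{X_0(v)\}_{v\in V(G)}\big)$, I would construct an all-to-all instance on the same graph $G$ by defining a single partition $L = \{l_1\}$, setting $M(l_1) = v^*$, and letting $X_0^{l_1}(v) = X_0(v)$ for every $v \in V_C$. Under this encoding, the all-to-all update rule in Eq.~\ref{eq:updatedata:all-to-all} collapses to the all-to-one rule in Eq.~\ref{eq:updatedata:all-to-one}, because with only one partition no node can simultaneously send and receive (sending a partition and receiving the same partition is explicitly forbidden). The termination condition Eq.~\ref{eq:stopconditionalltoall} likewise reduces to Eq.~\ref{eq:stopcondition}, and the cost definitions in Eq.~\ref{eq:totalcost}--\ref{eq:datatransfercost} apply verbatim. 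Consequently the set of feasible plans and their costs are in one-to-one correspondence between the two instances.

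From this correspondence I would conclude that an optimal plan for the constructed all-to-all instance yields, with no modification, an optimal plan for the original all-to-one instance with the same cost. Hence a hypothetical polynomial-time algorithm for finding a minimum-cost all-to-all plan would also produce a minimum-cost all-to-one plan, contradicting Corollary \ref{cor:mainresult} under the SSE conjecture. This establishes NP-hardness of the all-to-all aggregation problem, completing the proof of Theorem \ref{thm:main_result}.

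I do not anticipate any substantive obstacle: the entire argument is a straightforward containment of problem classes, and the authors have already signaled in the preceding paragraph that this reduction is trivial. The only point that warrants care is verifying that the special structure of a single-partition instance does not allow ``cheating'' configurations in the all-to-all model that were disallowed in the all-to-one model (for instance, a node being both sender and receiver); because the single partition forces those cases to coincide, no such discrepancy arises, and the reduction is cost-preserving.
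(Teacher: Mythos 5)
Your proposal is correct and follows essentially the same route as the paper: both treat the all-to-one instance as a degenerate all-to-all instance whose mapping $M$ sends every partition to $v^*$, and then invoke Corollary~\ref{cor:mainresult} to transfer the hardness. Your version merely spells out the single-partition encoding and the collapse of the update, termination, and cost rules in more detail than the paper does.
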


\begin{proof}
We reduce the all-to-one aggregation problem to the all-to-all
aggregation problem. Suppose that we are given an instance of the
all-to-one aggregation problem. By its definition, this is
also an instance of the all-to-all aggregation problem where the mapping
$M$ is such that the aggregation destination of every partition is 
node $v^* \in V_C$.
By Corollary
\ref{cor:mainresult} we know that the all-to-one aggregation problem is
NP-hard assuming Conjecture \ref{conj:sse}, therefore the all-to-all
aggregation problem is NP-hard as well.
\end{proof}


\section{Experimental evaluation}
\label{sec:exp}
\noindent
This section compares the GRASP algorithm with 
the repartitioning algorithms and LOOM.
Section~\ref{sec:exp:setup} introduces the experimental setup, which includes the hardware setting, the workloads and the baselines.
The other sections evaluate the following questions:
\begin{itemize}[leftmargin=*, itemsep=0.15em]
\item 
($\S$~\ref{sec:exp:similarity})
How well does GRASP leverage similarity between datasets?

\item
($\S$~\ref{sec:exp:overlap})
How does similarity within the dataset affect performance?

\item 
($\S$~\ref{sec:exp:loadimbalance})
Can GRASP benefit from workload imbalance?

\item 
($\S$~\ref{sec:exp:bwest})
How accurate is the bandwidth estimation? How robust is GRASP to estimation errors?

\item 
($\S$~\ref{sec:exp:nonuniformnetwork})
How does GRASP perform in nonuniform networks?

\item
($\S$~\ref{sec:exp:scaleout})
How does the performance change when the number of fragments increases?

\item 
($\S$~\ref{sec:exp:realdata})
Is GRASP faster than aggregation based on repartitioning and LOOM on TPC-H and
real datasets?

\item 
($\S$~\ref{sec:exp:ec2})
How well does GRASP work in a real-world deployment where the
network conditions are unpredictable? 

\end{itemize}

\vspace{0.5em}
\subsection{Experimental setup}
\label{sec:exp:setup}

\noindent
We implemented the GRASP framework in C++ and we have open-sourced our
prototype implementation~\cite{graspcode}.
We evaluate GRASP in two clusters.
The first is a shared cluster connected by a 1 Gbps network.
Each machine has two NUMA nodes with two Intel Xeon
E5-2680v4 14-core processors and 512 GB of memory.
The second cluster is Amazon EC2 with d2.8xlarge instances
which have 36 vCPUs and 244 GB of memory. 
The instances are connected with a 10 Gbps network.

We run one or more aggregation fragments in each machine/instance. 
Hence, one
fragment corresponds to one logical graph node in Figure~\ref{fig:graphexample}.
\newtext{We evaluate all-to-all aggregations by setting the mapping between
partitions and destinations so that aggregation results are evenly balanced
across all nodes.
We evaluate all-to-one aggregations by mapping all data
partitions to the same destination.}

Our evaluation reports the total response time to complete the
aggregation query. All our performance results include the time to plan the
aggregation using GRASP, the time to transfer all data to their destinations 
and the time to process the aggregation locally in each node.
All experiments use hash-based local aggregations.

\vspace{0.5em}
\subsubsection{Baselines}


\noindent
We compare GRASP with two baselines.
\newtext{The first baseline is LOOM~\cite{Culhane2014hotcloud,Culhane2015infocomm}.
As described in Section~\ref{sec:intro}, LOOM needs the size of aggregation results
during query planning.}
In our evaluation we configure LOOM to use the accurate result size so that LOOM achieves its
best performance.
The second baseline is repartitioning which has two versions.
One version is without local aggregation, where data
is directly sent to the destination fragment for aggregation.
We use ``Repart'' to denote this version.
The other version is with local aggregation, where data is first aggregated locally,
then the local aggregation result is sent to the destination fragment for aggregation.
We use ``Preagg+Repart'' to denote this version of repartitioning.
Note that repartitioning works for both all-to-all and all-to-one aggregations, while 
LOOM only works for all-to-one aggregations.

\subsubsection{Workloads}
\noindent
We use five workloads in our evaluation.

\textbf{1) Synthetic workload.} The first workload is a synthetic workload which has one table \texttt{R}, with two long
integers \texttt{R.a} and \texttt{R.b} as attributes.
The query evaluated is 
\texttt{SELECT R.a SUM(R.b) FROM R GROUP BY R.a}.

\textbf{2) TPC-H workload.} The second workload is the TPC-H workload with scale factor 80.
We evaluate this subquery from TPC-H Q18:
\texttt{SELECT ORDERKEY, SUM(QUANTITY) FROM LINEITEM GROUP BY ORDERKEY}.
%
The \texttt{LINEITEM} table is partitioned and distributed on the
\texttt{SUPPKEY} to framgents with a modulo hash function.

\textbf{3) MODIS workload.} The third workload is the Surface Reflectance data MOD09 from MODIS (Moderate
Resolution Image Spectroradiometer)~\cite{modis09}.
The MODIS data provides the surface relfectance of 16 bands together with the location
coordinates (latitude and longitude).
In the processing of MODIS data, one product is MOD09A1~\cite{modis09A1} which aggregates the observed
data in an 8-day period
with the following query:
\texttt{SELECT Latitude, Longitude, MIN(Band3) FROM RelfectTable GROUP BY ROUND(Latitude, 2),
\\ROUND(Longitude, 2) WHERE Date BETWEEN `01/01/2017' AND `01/08/2017'}.
%
The MODIS data is stored in separate files, one file per satelite image in
timestamp order.
We download about 1200 files from the MODIS website, and assigned files into
plan fragments in a round-robin fashion.
Overall, there are about 3 billion tuples and 648 million distinct GROUP BY
keys in this dataset.

\textbf{4) Amazon workload.} The fourth dataset is the Amazon review dataset~\cite{amazondataset}.
The review dataset has more than 82 million reviews from about 21 million users.
The dataset includes the reviewer ID, overall rating, review time and detail review etc.
We evaluate the following query to calculate the average rating a customer gives out.
\texttt{SELECT ReviewerID, AVG(OverallRate) FROM AmazonReview GROUP BY ReviewerID}.
%
The reviews are stored in timestamp order and we split this file into plan fragments.

\textbf{5) Yelp workload.} The fifth dataset is the Yelp review dataset~\cite{yelpdataset}.
The review dataset has more than 5 million reviews from about 1.3 million users.
The Yelp dataset has similar attributes as the Amazon dataset and we use a similar query
to calculate the average stars a customer gives.

\subsection{Experiments with uniform bandwidth}
\label{sec:uniformnetwork}

\noindent
This section evaluates GRASP in a setting where each
plan fragment communicates with the same bandwidth.
The measured inter-fragment bandwidth is 118 MB/s.
We experiment with 8 machines and 1 fragment per machine, which results
in 8 fragments in total.
We use the synthetic workload in this section.

\subsubsection{Effect of similarity across fragments}
\label{sec:exp:similarity}

\noindent
GRASP takes advantage of the similarities between datasets
in different fragments in aggregation scheduling.
How well does the GRASP algorithm take advantage of
similarities between datasets?

\begin{figure}[t]
\centering
\begin{subfigure}[b]{0.23\textwidth}
\centering
\includegraphics[scale=1.0]{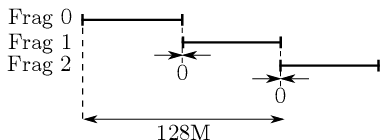}
\caption{Jaccard similarity $J=\frac{0}{128}$.}
\label{fig:similarity_dataset:0}
\end{subfigure}
\vspace{0.3em}
\begin{subfigure}[b]{0.23\textwidth}
\centering
\includegraphics[scale=1.0]{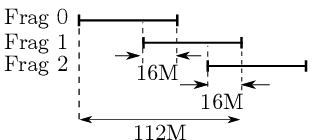}
\caption{Jaccard similarity $J=\frac{16}{112}$.}
\label{fig:similarity_dataset:1}
\end{subfigure}
\vspace{0.3em}
\caption{
The line segments represent the range of GROUP BY attributes.
The Jaccard similarity increases when the overlap of GROUP BY key ranges increases.
}
\vspace{-1.2em}
\label{fig:similarity_dataset}
\end{figure}

In this experiment, we
change the similarities between datasets, i.e. the number
of common GROUP BY keys, in different plan fragments.
Each plan fragment has 64 million tuples.
Figure~\ref{fig:similarity_dataset} shows how we change the similarity between datasets.
Each segment in Figure~\ref{fig:similarity_dataset} shows the range of $R.a$ in one 
fragment. Figure~\ref{fig:similarity_dataset} only shows fragments 0, 1 and 2.
The range of datasets between adjacent fragments has an overlap.
The Jaccard similarity increases when the size of the overlap increases.

The experimental results for all-to-one aggregation are shown in Figure~\ref{fig:exp:similarity}.
The horizontal axis is the Jaccard similarity coefficient between datasets.
The vertical axis is the speedup over the Preagg+Repart algorithm with Jaccard similarity 0.
Here speedup 1 corresponds to response time of 64.6 seconds.
Figure~\ref{fig:exp:similarity} shows that GRASP has the best performance and
is up to $4.1\times$ faster than Preagg+Repart and $2.2\times$ faster than LOOM when the Jaccard similarity is 1.
Figure~\ref{fig:exp:similarity} shows that the performance of Repart and Preagg+Repart stays
the same when the Jaccard similarity changes.
This means that repartitioning cannot utilize the similarities between datasets.

GRASP has better performance than LOOM for two reasons.
First, GRASP is data distribution-aware and prioritizes aggregations with higher similarity.
Second, GRASP has higher network utilization than LOOM.
In GRASP, a fragment can be both sending and receiving as long as it is not working on the same
partition.
In LOOM, a fragment is either a parent fragment receiving data or a child fragment sending data.

In all-to-all aggregation GRASP has similar performance with repartitioning as
there is no underutilized link in the network. We omit the results for brevity.

\begin{figure*}[t]
\begin{minipage}{0.32\linewidth}
\centering
\includegraphics[]{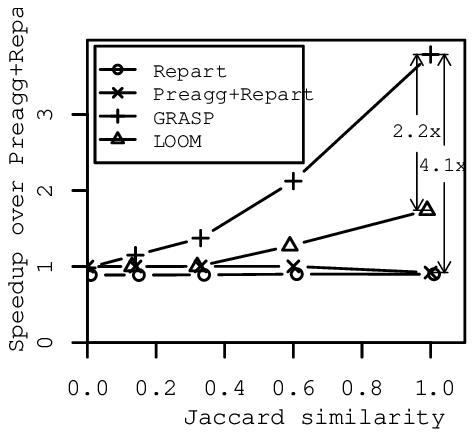}
\vspace{0.4em}
\caption{Speedup of GRASP when the similarity between datasets increases. GRASP is up to $2.2\times$ faster than LOOM and $4.1\times$ faster than Preagg+Repart.}
\label{fig:exp:similarity}
\end{minipage}
\hfill
\begin{minipage}{0.33\linewidth}
\centering
\includegraphics[scale=0.9]{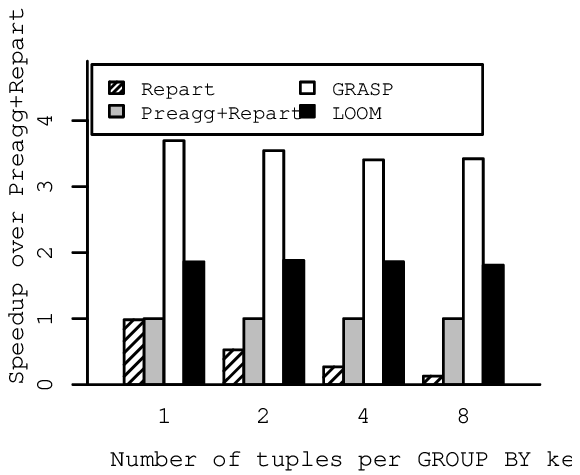}
\vspace{0.4em}
\caption{Speedup over Preagg+Repart when there are multiple tuples for each GROUP BY key in the same fragment for all-to-one aggregation.}
\label{fig:exp:localagg}
\end{minipage}
\hfill
\begin{minipage}{0.31\linewidth}
\centering
\includegraphics[]{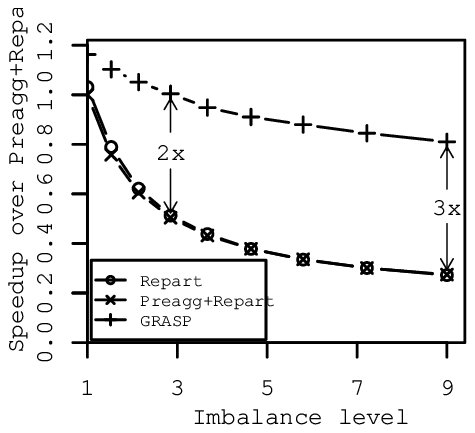}
\vspace{0.4em}
\caption{Speedup of GRASP for all-to-all aggregations when there fragment 0 receives more tuples.
GRASP is up to $3\times$ faster than Preagg+Repart.}
\label{fig:exp:workloadimb}
\end{minipage}
\end{figure*}

\subsubsection{Effect of similarity within fragments}
\label{sec:exp:overlap}

\noindent
This experiment evaluates how GRASP works when there are multiple
tuples for one GROUP BY key within one fragment.
In this case 
local aggregation will 
reduce the size of data, hence the Preagg+Repart algorithm will have better
performance than the Repart algorithm.

There are 128 million tuples in each fragment in this experiment.
We change the distinct cardinalities
of the datasets from 128 million, 64 million, 32 million to 16 million,
which changes the number of tuples per GROUP BY key from 1, 2, 4, to 8, respectively.
The smaller the distinct cardinality is, the more tuples are aggregated during local aggregation.

The results for the all-to-one aggregation are shown in Figure~\ref{fig:exp:localagg}.
The horizontal axis is the number of tuples for each GROUP BY key within the same fragment.
The vertical axis shows the speedup over the Preagg+Repart algorithm.
Higher bars means better performance.
The results show that Preagg+Repart has better performance than Repart when the number of tuples
for each GROUP BY key increases, which means there are more opportunities for local
aggregation.
However, GRASP always has better performance: it is more than $3\times$ faster than
Preagg+Repart and about $2\times$ faster than
than LOOM in all-to-one aggregations.
Hence GRASP has the same or better performance than repartition and LOOM when the similarity
within the same dataset changes.


\vspace{-0.2em}
\subsubsection{Effect of workload imbalance}
\label{sec:exp:loadimbalance}

\noindent
In parallel aggregation, some fragments may receive more tuples to aggregate
for two reasons.
First, the repartition function may assign more GROUP BY keys to some fragments.
Second, even if each fragment gets the same number of GROUP BY keys to process,
there may be skew in the dataset.
In this section, we evaluate
how GRASP works when one fragment gets more tuples to process.

In this experiment, we have 128 million tuples and $R.a$
ranges from 1 to 128 million.
We change the repartition function to assign more tuples to fragment 0.
We assign $n$ million tuples to fragment 0 for aggregation
and assign $m = \frac{128-n}{7}$
million tuples to the other fragments.
We use $l = \frac{n}{m}$ to denote the \textit{imbalance level}.
When $n$ equals to 16, $l$ is 1 and there is no imbalance.
However, as $n$ increases, fragment 0 gets more tuples than other fragments.

The results are shown in Figure~\ref{fig:exp:workloadimb}.
The horizontal axis is \textit{imbalance level} 
$l$.
The vertical axis is the speedup over Preagg+Repart when $l$ is 0.
Here speedup 1 corresponds to response time of 22.1 seconds.
Notice that LOOM is not shown here because LOOM does not work for all-to-all aggregations.
Figure~\ref{fig:exp:workloadimb} shows that the performance of repartition and GRASP both
decreases
when the workload imbalance increases.
However, the performance decreases much faster for repartition than GRASP and GRASP is
already $2\times$ faster than Preagg+Repart when fragment 0 receives about 3 times
of data of other fragments.
This is because in repartition, other fragments will stop receiving and aggregating data when they are waiting for fragment 0 to complete.
While for GRASP, other fragments are still scheduled to receive and aggregate data.
GRASP improves performance when some fragments process
more tuples.


\subsection{Experiments with nonuniform bandwidth}
\label{sec:nonuniformnetwork}

\noindent
GRASP is cognizant of the network topology, which is crucial 
when the communication
bandwidth is nonuniform. Non\-uniform bandwidth means that some plan fragments communicate at
different speeds than others.
The distribution of the link bandwidth is not uniform in 
many common network topologies.
Datacenter networks are often oversubscribed
and data transfers within the same rack
will be faster than data transfers across racks~\cite{datacenter}.
The data transfer throughput between instances in the
cloud is also nonuniform~\cite{luo18socc}.
Even HPC systems which strive for balanced networks may have nonuniform
configurations~\cite{hpctopo}.

This section evaluates how GRASP performs when the network bandwidth is
nonuniform.
All experiments in this section run multiple concurrent plan
fragments in each server to emulate a nonuniform network 
where some data transfers will be faster than others due to locality.

\begin{figure}[b!]
\vspace{-2em}
\begin{minipage}{0.42\linewidth}
\includegraphics[scale=1.0]{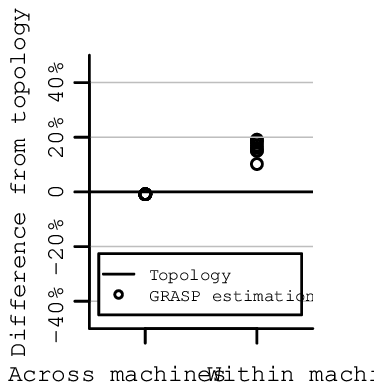}
\caption{Comparing between the theoretical bandwidth and the bandwidth
estimated from
	benchmarks.}
\label{fig:bwcomp}
\end{minipage}
\hspace{1em}
\begin{minipage}{0.53\linewidth}
\includegraphics[scale=1.0]{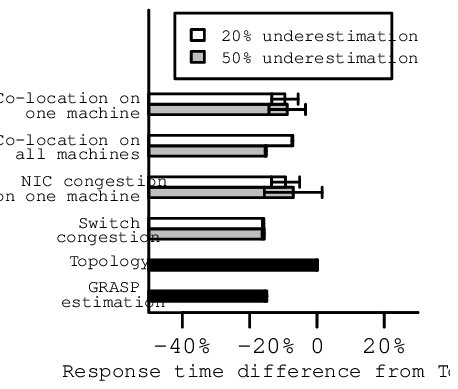}
\caption{Speedup on the MODIS dataset when changing the estimated bandwidth.}
\label{fig:bwest}
\end{minipage}
\end{figure}

\begin{figure*}[t]
\begin{minipage}{0.21\linewidth}
\includegraphics[scale=0.8]{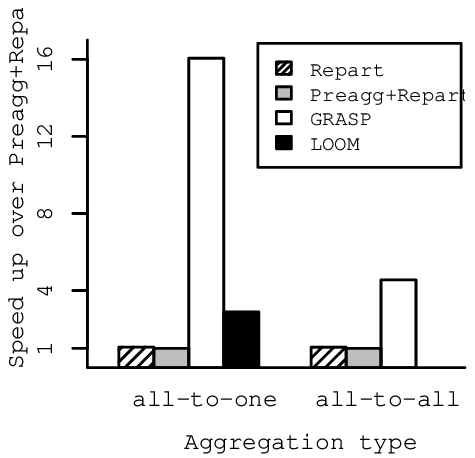}
\vspace{0.2em}
\caption{Speedup over Preagg+Repart with nonuniform bandwidth.}
\label{fig:exp:topology}
\end{minipage}
\hfill
\begin{minipage}{0.47\linewidth}
\centering
\includegraphics[]{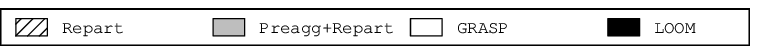}
\subcaptionbox{All-to-one aggregation.\label{fig:exp:scaleout:alltoone}}{
\includegraphics[scale=0.8]{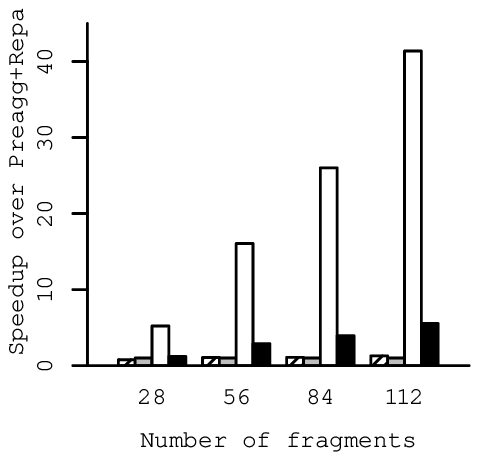}
}
\subcaptionbox{All-to-all aggregation.\label{fig:exp:scaleout:alltoall}}{
\includegraphics[scale=0.8]{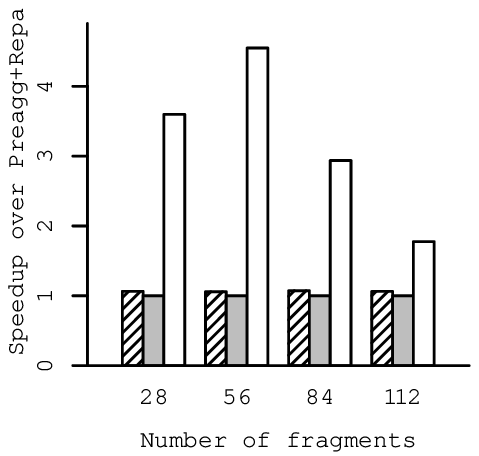}
}
\vspace{0.2em}
\caption{Speedup over Preagg+Repart when scaling out.}
\label{fig:exp:scaleout}
\end{minipage}
\hfill
\begin{minipage}{0.28\linewidth}
\includegraphics[scale=0.8]{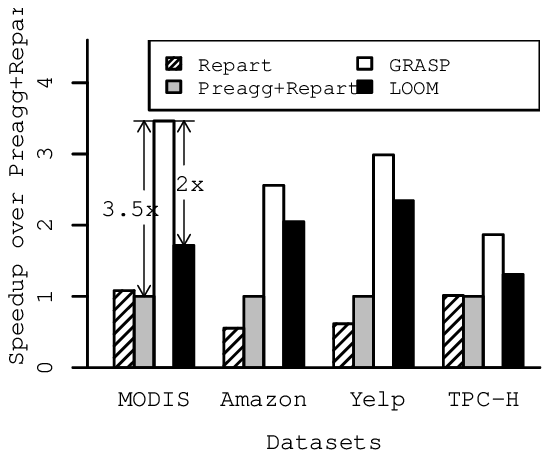}
\vspace{0.2em}
\caption{Speedup over Preagg+Repart on TPC-H workload and real datasets.}
\label{fig:exp:realdata}
\end{minipage}
\end{figure*}


\subsubsection{Impact of bandwidth estimation}
\label{sec:exp:bwest}

%

\noindent
The bandwidth estimation procedure described in Section~\ref{sec:alg:estbw}
leads to two questions: how accurate is the estimation and how robust is GRASP
to estimation errors?

Figure~\ref{fig:bwcomp} compares the available bandwidth as estimated by GRASP versus 
a manual calculation based on the hardware specifications, the network topology and the fragment placement.
This experiment uses 8 machines with each machine having 14 fragments in the experiment.
``Within machine'' and ``Across machines'' corresponds to the communication bandwidth between
fragments within the same node and across different nodes, respectively.
The result shows that the estimation error is within 20\% from the
theoretical bandwidth.
We conclude that the GRASP estimation procedure is fairly accurate in an idle cluster.

The estimation procedure may introduce errors in production clusters that are rarely idle. 
%
%
Figure~\ref{fig:bwest} shows the impact of bandwidth underestimation on
the response time of the aggregation plan produced by GRASP.
We test two underestimation levels, $20\%$ and $50\%$ from the theoretical value.
In this experiment we force GRASP to use a modified bandwidth matrix while
running the aggregation query on the MODIS dataset. 
We run the experiment 10 times picking nodes at random for each setting, and
show the standard deviation as an error bar.
Co-location results in the underestimation of the
communication bandwidth between local fragments in one or more machines.
NIC contention and switch contention underestimates the available network
bandwidth for one or all nodes in the cluster, respectively.
``Topology'' corresponds to the calculation based on the hardware capabilities,
while ``GRASP estimation'' corresponds to the procedure described in
Section~\ref{sec:alg:estbw}.
The horizontal axis is the response time difference with respect to the plan
GRASP generated using the theoretical hardware capabilities (hence, lower means faster).
The result shows that GRASP has better performance when using the estimated bandwidth
matrix than the accurate bandwidth from network topology.
This is because 
the estimated bandwidth measured from
the benchmark is closer to the available bandwidth during query execution.
Moreover, even when the available bandwidth is underestimated by up to 50\%,
the change in query response time is less than 20\%.
We conclude that GRASP is robust to errors introduced during bandwidth approximation.

\subsubsection{Effect of nonuniform bandwidth}
\label{sec:exp:nonuniformnetwork}

\noindent
GRASP takes network bandwidth into consideration in aggregation
scheduling.
How well does GRASP work when the bandwidth between network
links is different in a cluster?

In this experiment, we use 4 machines and each machine has 14 aggregation fragments.
The dataset in each fragment has 14 million tuples with $R.a$ ranging from 1 to 14 million.

The result is shown in Figure~\ref{fig:exp:topology}.
The vertical axis is the speedup over Preagg+Repart.
The results show that GRASP has better performance than both
repartitioning and LOOM in both all-to-one and all-to-all aggregations.
GRASP is up to $16\times$ faster than Preagg+Repart and $5.6\times$ faster
than LOOM in all-to-one aggregation and
$4.6\times$ faster than Preagg+Repart in all-to-all aggregation.
This is because GRASP is topology-aware and schedules more aggregations
on the faster network links.
GRASP is topology-aware and has better performance than the baselines when the bandwidth between fragments is not uniform.


\subsubsection{Effect of more plan fragments}
\label{sec:exp:scaleout}

\noindent
GRASP considers the candidate aggregations between all plan fragments for all
partitions in each phase of aggregation scheduling.
Hence the cost of GRASP increases when there are more plan fragments.
In this experiment, we evaluate how GRASP works when the number of fragments
increases.
We change the number of fragments from 28, 56, 84 to 112 by
running 14 fragments per node and changing the number of nodes from 2, 4, 6 to 8.
Each plan fragment has 16 million tuples with $R.a$ ranging from 1 to 16 million.

The result is shown in Figure~\ref{fig:exp:scaleout}, where the horizontal
axis is the number of fragments and the vertical axis is the speedup over Preagg+Repart.
For all-to-one aggregations, Figure~\ref{fig:exp:scaleout:alltoone} shows
that GRASP has better performance and
is $41\times$ faster than Preagg+Repart and $7.5\times$ faster than
LOOM when the number of fragments is 112.
The speedup increases when the number of fragments increases.
This is because in all-to-one aggregations the receiving link of the final
destination node is the bottleneck when repartitioning.
Hence, the performance of repartitioning rapidly degrades when the number of fragments increases.

For all-to-all aggregations, Figure~\ref{fig:exp:scaleout:alltoall} shows that GRASP
is $4.6\times$ faster than Preagg+Repart when the number of fragments is 56.
However, the speedup decreases for GRASP when the number of fragments exceeds 56 in
all-to-all aggregation.
This is because the planning 
cost of GRASP becomes more expensive in all-to-all aggregations as there
are more candidate transfers to consider in each phase.
This points to the need to parallelize aggregation planning for all-to-all
aggregations in networks that concurrently execute hundreds of plan fragments.


\subsubsection{Real datasets and the TPC-H workload}
\label{sec:exp:realdata}

\noindent
These experiments evaluate the performance of the GRASP plans 
with the TPC-H workload and three real datasets.
We use 8 machines and 14 fragments per machine.
The dataset is aggregated to fragment 0, which corresponds to
the all-to-one aggregation.

\vspace{0.75em}
\noindent
\textbf{Speedup results:}
Figure~\ref{fig:exp:realdata} shows the 
speedup over Preagg
+Repart for each algorithm.
The result shows that GRASP has the best performance for all datasets.
GRASP is $2\times$ faster than LOOM and $3.5\times$ faster than Preagg+Repart 
in the MODIS dataset.


\vspace{0.75em}
\noindent
\textbf{Network utilization:}
Figure~\ref{fig:netutil} shows the network utilization plot for the MODIS dataset.
The horizontal axis is the time elapsed since the query was submitted to the coordinator.
(Note that the scale of the horizontal axis is not the same, as some
algorithms finish earlier than others.)
Each horizontal line in the plot represents one incoming network link or one outgoing link
of a fragment.
For each link, we plot a line when there is traffic in the link and leave it blank
otherwise.

Figure~\ref{fig:netutil:grasp} shows network utilization with GRASP.
After a short delay to compute the aggregation plan, 
the network is fully utilized in the first few
phases and there is traffic in all links.
As the aggregation progresses, more fragments contain no data and hence
these fragments do not further participate in the aggregation.
The aggregation finishes in under 300 seconds.

Figure~\ref{fig:netutil:loom} shows LOOM. One can see that the network
links, especially the receiving links, are not as fully utilized as in
Figure~\ref{fig:netutil:grasp}.
The fan-in of the aggregation tree produced by LOOM is 5 for this
experiment, which makes the receiving link of the parent fragment to be
bottleneck.
The aggregation finishes in about 600 seconds.

Figure~\ref{fig:netutil:preagg} shows Preagg+Repart.
All receiving links except fragment 0 (the aggregation destination) are
not utilized. The entire aggregation is bottlenecked on the receiving
capability of fragment 0.
The aggregation takes more than 900 seconds.
We omit the figure for Repart as it is similar to Preagg+Repart.

\begin{figure}[]
\centering
\begin{subfigure}{0.15\textwidth}
\centering
\includegraphics[scale=0.7]{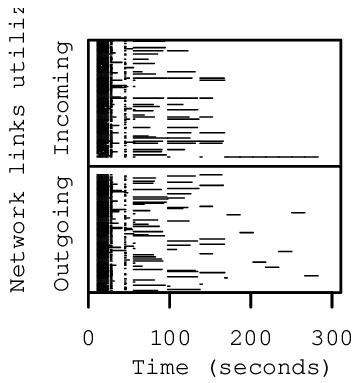}
\caption{GRASP}
\label{fig:netutil:grasp}
\end{subfigure}
\begin{subfigure}{0.15\textwidth}
\centering
\includegraphics[scale=0.7]{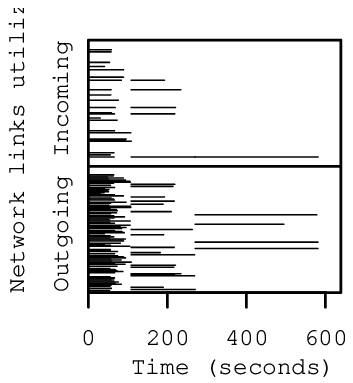}
\caption{LOOM}
\label{fig:netutil:loom}
\end{subfigure}
\begin{subfigure}{0.15\textwidth}
\centering
\includegraphics[scale=0.7]{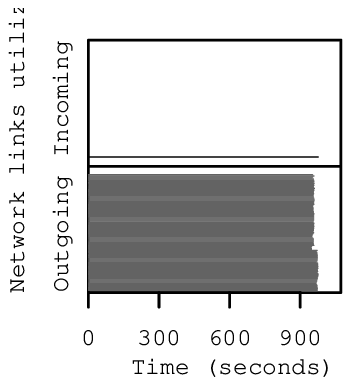}
\caption{Preagg+repart}
\label{fig:netutil:preagg}
\end{subfigure}
\vspace{0.25em}
\caption{Network link utilization.}
\vspace{-0.5em}
\label{fig:netutil}
\end{figure}

\vspace{0.5em}
\noindent
\textbf{Tuples transmitted to destination:}
The GRASP performance gains can be directly attributed to the fact that
it transmits less data on the incoming link of the destination fragment,
which is frequently the bottleneck of the entire aggregation.
Table~\ref{tab:exp:modisri2:f0tup} shows
how many tuples the destination fragment receives under different algorithms.
Local pre-aggregation has minimal impact as it is only effective when duplicate
keys happen to be co-located on the same node.
LOOM transmits fewer tuples to the destination fragment as tuples are
combined in the aggregation tree before arriving at the final
destination fragment.
By aggressively combining fragments based on their similarity, GRASP
transmits $2.7\times$ less tuples than LOOM to the destination fragment.

\vspace{0.5em}
\noindent
\textbf{Accuracy of minhash estimation:}
We also evaluate the accuracy of the minhash estimation with the MODIS dataset.
Figure~\ref{fig:exp:minhash} shows the cumulative distribution function
of the absolute error in estimating the size of the intersection between
fragments when the cardinality of the input is accurately known.
The result shows that 
the absolute error of the size of the intersection is less than $10\%$
for $90\%$ of the estimations. 
We conclude that the minhash estimation is accurate and it allows
GRASP to pick suitable fragment pairs for aggregation.%



\begin{table}[t]
\caption{Tuples received by the final destination fragment.}
\label{tab:exp:modisri2:f0tup}
\small
\begin{tabular}{cccc}
\toprule
Repart     & Preagg+Repart & LOOM & GRASP      \\
\midrule
3,464,926,620 & 3,195,388,849 & 2,138,236,114  & 787,105,152 \\
\bottomrule
\end{tabular}
\vspace{-2em}
\end{table}

\begin{figure}[t]
\begin{minipage}{0.43\linewidth}
\includegraphics[scale=0.8]{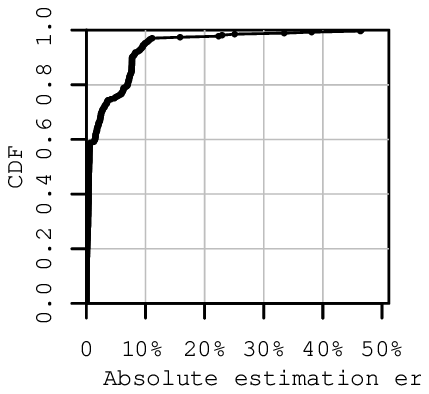}
\caption{Absolute error in minhash estimation.}
\vspace{-1em}
\label{fig:exp:minhash}
\end{minipage}
\hspace{1em}
\begin{minipage}{0.52\linewidth}
\includegraphics[scale=0.8]{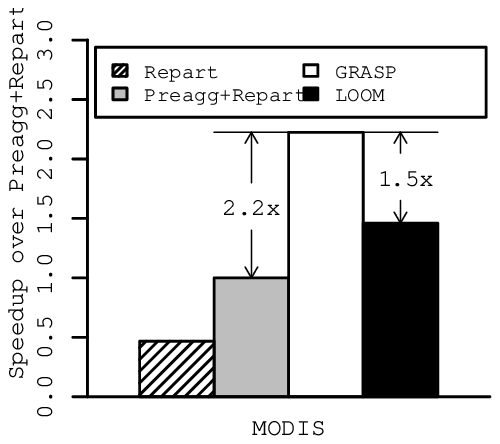}
\caption{Speedup over Preagg+Repart on the MODIS dataset in Amazon EC2.}
\vspace{-1em}
\label{fig:modisec2}
\end{minipage}
\end{figure}

\subsubsection{Evaluation on Amazon EC2}
\label{sec:exp:ec2}


\noindent
This section evaluates GRASP on the MODIS dataset on Amazon EC2.
We allocate 8 instances of type d2.8xlarge and run 6 fragments in each instance.
Figure~\ref{fig:modisec2} shows the 
speedup over the Preagg+Repart algorithm for each algorithm.
Preagg+Repart has better performance than Repart in this experiment.
This is because the fast 10 Gbps network in EC2 makes the query compute bound. 
The throughput of the local aggregation on pre-aggregated data is measured to be 811 MB/s,
which is faster than aggregation on raw data with throughput to be 309 MB/s.
This does not make a difference in the experiment in Section~\ref{sec:exp:realdata}, as
aggregation is network bound in the 1 Gbps network where the maximum throughput is 125 MB/s.
However, the aggregation is compute bound in the 10 Gbps network of EC2 with a
maximum throughput of 1.2 GB/s, hence pre-aggregation makes a big difference.

Figure~\ref{fig:modisec2} shows that GRASP is 2.2$\times$ faster than
Preagg+
Repart and 1.5$\times$ faster than LOOM.
GRASP still has better performance when computation is the bottleneck.
This is because GRASP maximizes network utilization by scheduling as many
aggregations as possible in each phase, which also maximizes the number of
fragments participating in the aggregation and sharing the computation load of each phase.%


%

\section{Related work}
\label{sec:relwork}
\subsubsection*{Aggregation execution}
\noindent
Aggregation has been extensively studied in previous works.
Many works have focused on how to execute an aggregation efficiently in a single server.
Larson~\cite{larson:agg02icde} studied how to use partial aggregation
to reduce the input size of other operations.
Cieslewicz and Ross~\cite{Cieslewicz:2007vldb} evaluated aggregation algorithms 
with independent and shared hash tables on multi-core processors.
Ye et al.~\cite{ye:agg11damon} compared different in-memory parallel aggregation algorithms on the Intel Nehalem architecture.
Raman et al.~\cite{Raman:2013vldb} described the grouping and aggregation algorithm used in DB2 BLU.
M\"{u}ller et al.~\cite{Muller:2015sigmod} proposed an
adaptive algorithm which combines the hashing and sorting implementations.
Wang et al.~\cite{liwang:numaagg15} proposed a NUMA-aware aggregation algorithm.
Jiang and Gagan~\cite{gagan:simd17} and Polychroniou et al~\cite{Polychroniou:2015simd} 
used SIMD and MIMD to parallelize the execution of aggregation.
Gan et al.~\cite{momentagg} optimized high cardinality aggregation queries with moment based summaries.
M{\"{u}}ller et al.~\cite{gustavo18agg} studied the floating-point aggregation.

Aggregation has also been studied in the parallel database system literature.
Graefe~\cite{graefe:93} introduced aggregation evaluation techniques in parallel database system.
Shatdal and Naughton~\cite{Shatdal:parallelagg95sigmod} proposed adaptive
algorithms which switch between the repartition and the two-phase algorithm at runtime.
Aggregation trees are used in accelerating parallel aggregations.
Melnik et al.~\cite{dremel10vldb} introduced Dremel, which uses a multi-level
serving tree to execute aggregation queries.
Yuan et al.~\cite{agginterface:sosp09} compared the interfaces and implementations for user-defined
distributed aggregations in several distributed computing systems.
Mai et al.~\cite{Mai14Netagg} implemented NetAgg which aggregates data along network paths.
Costa et al.~\cite{camdoop12nsdi} proposed Camdoop, which does in-network aggregation for a MapReduce-like system
in a cluster with a direct-connect network topology.
Yalagandula and Dahlin~\cite{sdims04sigcomm} designed a distributed information management system to do hierarchical aggregation in networked systems.
Culhane et al.~\cite{Culhane2014hotcloud,Culhane2015infocomm} proposed LOOM, which builds an aggregation
tree with fixed fan-in for all-to-one aggregations.

The impact of the network topology on aggregation has been studied.
Gupta et al.~\cite{gupta:dsn01} proposed an aggregation algorithm that works in unreliable networks
such as sensor networks.
Madden et al.~\cite{Madden:2003sigmod} designed an acquisitional query processor for sensor networks
to reduce power in query evaluation.
Madden et al.~\cite{madden02osdi, madden02mcsa} also proposed a tiny aggregation service which does
in network aggregation in sensor networks.
Chowdhury et al.~\cite{orchestra11sigcomm} proposed Orchestra to manage network activities
in MapReduce systems.

None of the above aggregation algorithms takes advantage of the similarity between fragments as
GRASP does.
The most relevant work is LOOM which considers the amount of data reduction in an aggregation during planning.
However LOOM only considers the overall reduction rate and does not consider data similarities during aggregation.
The biggest strength of GRASP is that it carefully estimates the size of every
partial aggregation and handles each partition differently, which
is not possible with LOOM. 

\subsubsection*{Distribution-aware algorithms}
\noindent
Distribution-aware algorithms use information about the distribution and the placement of the data during query processing.
Prior works have extensively studied how to take advantage of locality.
Some algorithms consider the offline setting.
Zamanian et al.~\cite{zamanian:partition15sigmod} introduced a data partitioning algorithm to maximize locality in the data distribution.
Prior works have also considered how to extract and exploit locality information at runtime.
R{\"{o}}diger et al.~\cite{Rodiger:14icde} proposed a locality-sensitive join algorithm which first
builds a histogram for the workload, then schedules the join execution to reduce network traffic.
Polychroniou~\cite{trackjoin:sigmod15} proposed track-join, where the distribution of the join key is exchanged
across the cluster to generate a join schedule to leverage locality.
Lu et al.~\cite{AdaptDB} proposed AdaptDB, which refines data partitioning according to access patterns
at runtime.

Distribution-aware algorithms have also been proposed to deal with skewed datasets.
DeWitt et al.~\cite{DeWitt:joinskew} handled skew in a join by first sampling the data, then partitioning the build relation and replicating the probe relation as needed.
Shah et al.~\cite{flux03} implemented an adaptive partitioning operator to 
collect dataset information at runtime and address the problem of workload imbalance in continuous query systems.
Xu et al.~\cite{xu:sigmod08} addressed skew in parallel joins by 
first scanning the dataset to identify the skewed values, then keeping the
skewed rows locally and duplicating the matching rows.
R{\"{o}}diger et al.~\cite{flowjoin16} adopted similar approach as DeWitt et al.~\cite{DeWitt:joinskew} by first sampling 1\% of the data and then use this information to decide the data partition scheme.
Wolf et al.~\cite{wolfjoinskew:93} divided the parallel hash join into two phases, and add one scheduling phase to split the partition with data skew.
Elseidy et al.~\cite{Elseidy:2014} proposed a parallel online dataflow join which is resilient to data skew.


\vspace{-0.5em}
\section{Conclusions and future work}
\label{sec:conclude}




\noindent
Parallel aggregation is a ubiquitous operation in data analytics.
For low-cardinality parallel aggregations, the network cost is
negligible after the data has been aggregated locally using pre-aggregation.
However, the network communication cost becomes
significant for high-cardinality parallel aggregations.
This paper proposes GRASP, an algorithm that schedules parallel
aggregation in a distribution-aware manner to increase network
utilization and reduce the communication cost for algebraic
aggregations.

Looking ahead, GRASP can be further extended in two promising ways. 
First, GRASP can be extended for non-algebraic aggregations.
This would require a new metric to quantify the data reduction of an 
aggregation pair. 
Second, the assumption that the communication cost dominates the
aggregation marginally holds on 10 Gbps networks, and will not hold in
faster networks such as 
InfiniBand. One opportunity is to 
augment the cost estimation formulas to account for compute overheads,
instead of modeling the network transfer cost alone.
This can jointly optimize compute and
communication overheads during aggregation in high-performance networks.

\vspace{0.5em}
\noindent
\textbf{Acknowledgements:}
We would like to acknowledge Srinivasan Parthasarathy, Jiongqian Liang,
Vishal Dey and the anonymous reviewers for their
insightful comments that improved this paper.
This work was supported by the National Science Foundation
grants IIS-1464381, CCF-1816577, CCF-1815145, CCF-1423230 and CAREER award 1453472.

\balance
\bibliographystyle{abbrv}
\bibliography{ms}

\end{document}